\newtheorem{teo}{Theorem}
\newtheorem{pro}[teo]{Proposition}
\newtheorem{cor}[teo]{Corollary}
\theoremstyle{definition}
\newtheorem{rem}[teo]{Remark}
\newcommand{\ps}{{\perp_s}}
\title[Asymmetric EAQECCs and BCH codes]{Asymmetric entanglement-assisted quantum error-correcting codes and BCH codes}
\author{C. Galindo, F. Hernando, R. Matsumoto and D. Ruano}
\curraddr{\texttt{Carlos Galindo and Fernando Hernando:} Instituto
Universitario de Matem\'aticas y Aplicaciones de Castell\'on and
Departamento de Matem\'aticas, Universitat Jaume I, Campus de Riu
Sec. 12071 Castell\'{o} (Spain)\\
\texttt{Ryutaroh Matsumoto:} Department of Information and
Communication Engineering, Nagoya University, Nagoya, 464-8603 Japan,
and Department of Mathematical Sciences, Aalborg University,
  Denmark\\
\texttt{Diego Ruano:} IMUVA-Mathematics Research Institute, University of Valladolid, 47011 Valladolid (Spain).}
\email{
galindo@uji.es;
carrillf@uji.es;
ryutaroh.matsumoto@nagoya-u.jp;
diego.ruano@uva.es}
\date{}
\thanks{This work was supported in part by the Spanish Government MICINN/FEDER grants PGC2018-096446-B-C21, PGC2018-096446-B-C22 and RED2018-102583-T and MINECO grant RYC-2016-20208 (AEI/FSE/UE), Generalitat Valenciana, grant AICO-2019-223, as well as by Universitat Jaume I, grant P1-1B2018-10. Also by the JSPS (Japan) under grant 17K06419 and by the Junta de CyL (Spain) under grant VA166G18.}
\subjclass[2010]{81P70; 94B65; 94B05}
\keywords{Asymmetric entanglement-assisted quantum error-correcting codes, Asymmetric entanglement-assisted Gilbert-Varshamov bound, BCH codes}%
\begin{document}

\begin{abstract}
The concept of asymmetric entanglement-assisted quantum error-correcting code (asymmetric EAQECC) is introduced in this article. Codes of this type take advantage of the asymmetry in quantum errors since phase-shift errors are more probable than qudit-flip errors. Moreover, they use  pre-shared entanglement between encoder and decoder to simplify the theory of quantum error correction and increase the communication capacity. Thus,  asymmetric EAQECCs can be constructed from any pair of classical linear codes over an arbitrary field. Their parameters are described and a Gilbert-Varshamov bound is presented. Explicit parameters of  asymmetric EAQECCs from BCH codes are computed and examples exceeding the introduced Gilbert-Varshamov bound are shown. \end{abstract}

\maketitle

\section{Introduction}
In the last decades the interest in quantum computation has grown exponentially, mainly because it transforms some intractable problems into tractable ones as showed the polynomial time algorithms given by Shor for discrete logarithms and prime factorization \cite{22RBC}.

The usage of subatomic particles to hold memory and the application of quantum mechanics determine the behavior of quantum computers. These computers (the current implementations) are less reliable than the classical ones and produce more errors. Another inconvenient with this computers is decoherence and, even when one cannot clone quantum information \cite{8AS, 26RBC}, both challenges can be addressed with quantum error correction \cite{23RBC, 95kkk}.

The first steps in the construction of quantum error-correcting codes corresponded to the binary case \cite{18kkk, 19kkk, 38kkk} (see also \cite{7kkk, 8kkk, 45kkk}). Afterwards and especially because of their interest in fault-tolerant computation the non-binary case was also studied \cite{ketkar06} (some more references are \cite{AK,BE,opt,lag3, 71kkk}). Most of the quantum error-correcting codes are stabilizer codes where the error group is determined by eigenspaces with eigenvalue $1$.

Sufficient (respectively, necessary) conditions for existence of (sometimes pure) quantum codes are given by the Gilbert-Varshamov bounds \cite{eck,feng, ketkar06, matsumotouematsu01} (respectively, quantum singleton or Hamming bounds \cite{3DC, opt,ketkar06, 28DC}).

Unitary operators, usually denoted $X$ and $Z$, are used to provide  quantum (error-correcting) codes and the minimum distance $d$ of such codes indicates that one can correct up to $\lfloor (d-1)/2 \rfloor$ phase-shift and qudit-flip errors. In \cite{IOF}, the authors noticed that phase-shift errors happened more likely than qudit-flip errors, thus it was desirable to construct quantum codes where two minimum distances $d_x$ and $d_z$, for detecting qudit-flip and phase-shift errors, respectively, were considered and provide results for addressing their behavior. As a consequence, in the last years asymmetric quantum error-correcting codes have been studied giving rise to codes suitable when dephasing occurs more often than relaxation \cite{EZ1,EZ2,EZ3,LG2,LG3,Sarve}. Most of the asymmetric quantum codes come from the CSS construction of quantum stabilizer codes and, for them, there is also a Gilbert-Varshamov bound \cite{Matsu}. In addition, the existence of an asymmetric quantum error-correcting code coming from the CSS construction can also be applied to linear ramp secret sharing and communication over wiretap channels of type II \cite{GeG}.

To provide an asymmetric (or symmetric) quantum code requires some type of self-orthogonality of the classical constituent code (or an inclusion of a constituent code into the dual of other constituent one) and, then, many good classical codes cannot be considered for that purpose. For overcoming this restriction and boosting the rate of transmission, it was proposed in \cite{brun1} (for the symmetric case) to share entanglement between encoder and decoder. Some constructions of this type for binary codes (and also for codes over finite fields $\mathbb{F}_p$, $p$ prime) can be found in the literature \cite{hsieh,schin,wilde08}. The case when the codes are supported in an arbitrary finite field has been described in \cite{GaHMR}.

It seems clear that it remains to consider entanglement-assisted quantum error-correcting codes (EAQECCs) for the asymmetric case. To the best of our knowledge this task had not been performed yet. Section \ref{sect2} of this paper is devoted to explain how to construct and which are the parameters of an asymmetric EAQECC obtained from any two linear classical codes. Theorem \ref{el4} and Theorem \ref{puncturing} (for nested constituent codes) are the main results in this section.  Section \ref{sect3} gives a Gilbert-Varshamov bound for asymmetric EAQECCs; we state and prove this bound for both the finite and the asymptotic case. In Section \ref{sect4} we present the explicit computation of the parameters  of asymmetric EAQECCs coming from BCH codes, see Theorem \ref{elB} and Corollary \ref{elc}. 
 Finally, our Section  \ref{sect5} provides examples of asymmetric EAQECCs which exceed the Gilbert-Varshamov bound before stated. Notice that asymmetric EAQECCs give rise to (symmetric) EAQECCs and in this section we show also examples of EAQECCs obtained with our procedure exceeding the Gilbert-Varshamov bound for EAQECCs.

\section{Asymmetric EAQECCs}
\label{sect2}
Let $q=p^r$ a positive power of a prime number $p$ and set $\mathbb{F}_q$ the finite field of order $q$. A $q$-ary {\it stabilizer quantum code} is the linear space of $(\mathbb{C}^q)^n$ given by the intersection of the eigenspaces with eigenvalue 1 corresponding to some subgroup $S$ of the error group $G_n$ generated by the matrices corresponding to a basis of $ \mathrm{Hom} \left(
(\mathbb{C}^q)^{\otimes n}, (\mathbb{C}^q)^{\otimes n}\right)$, that is $G_n$ is determined by the product $X(\mathbf{a}) Z(\mathbf{b})$ of  tensor products $X(\mathbf{a})= X(a_1) \otimes X(a_2) \otimes \cdots \otimes X(a_n)$ and $Z(\mathbf{b})= Z(b_1) \otimes Z(b_2) \otimes \cdots \otimes Z(b_n)$ of unitary operators $X$ and $Z$ over $\mathbb{C}^q$, where $\mathbf{a} = (a_1, a_2, \ldots, a_n)$, $\mathbf{b} = (b_1, b_2, \ldots, b_n)$, and $a_i, b_i \in \mathbb{F}_q$, $1 \leq i \leq n$. It is known \cite[Lemma 11]{ketkar06} that an error in $G_n$ is detectable by the stabilizer code if and only if it belongs to the group generated by the subgroup $S$ and the center of $G_n$ or the error is not in the centralizer of $S$ in $G_n$.

The above facts can be regarded in terms of additive codes in $\mathbb{F}_q^{2n}$. In order to do this, we introduce the trace-symplectic form for two vectors $\left(\mathbf{a}|\mathbf{b}\right), \left(\mathbf{a'}|\mathbf{b'}\right) \in \mathbb{F}_q^{2n}$ as follows:
\[
\left(\mathbf{a}|\mathbf{b}\right) \cdot_{ts} \left(\mathbf{a'}|\mathbf{b'}\right) = \mathrm{tr}_{q|p} \left(\mathbf{a} \cdot \mathbf{b'} - \mathbf{a'} \cdot \mathbf{b}\right) \in \mathbb{F}_p,
\]
where  $\mathrm{tr}_{q|p}$ is the trace map and $\cdot$ the inner product in $\mathbb{F}_q^{n}$. Then (in the linear case) an $[[n,k, d]]_q$ stabilizer quantum code exists if and only if there is a linear code $C \subseteq \mathbb{F}_q^{2n}$ of dimension $n-k$ such that $C \subseteq C^{\perp_{ts}}$, where $C^{\perp_{ts}}$ stands for the dual code with respect to the $\cdot_{ts}$ product. Here the minimum distance $d$ is determined by the minimum symplectic weight $\mathrm{swt} ( C^{\perp_{ts}} \setminus C)$. It is convenient to recall that for $\left(\mathbf{a}|\mathbf{b}\right)$ as above,
\[
\mathrm{swt} \left(\mathbf{a}|\mathbf{b}\right) = \# \left\{j \in \{1, 2, \ldots, n\} | (a_j,b_j) \neq (0,0)\right\},
\]
$\#$ meaning cardinality.

A particular case in the above construction follows from the so-called CSS (Calderbank-Shor-Steane) procedure \cite{95kkk, 19kkk}. Here we need two linear codes $C_1$ and $C_2$ in $\mathbb{F}_q^{n}$ such that $C_2 \subseteq C_1^\perp$, $\perp$ means Euclidean duality, and then the code $C=C_1 \times C_2 \subseteq \mathbb{F}_q^{2n}$ provides a stabilizer quantum code whose parameters depend on those of $C_1$ and $C_2$. Some classical references are \cite{AK,BE, 18kkk, 19kkk, 20kkk}.

The fact that dephasing usually happens much more often that relaxation  \cite{IOF} motivated the study and searching of asymmetric quantum error-correcting codes \cite{EZ1,EZ2,EZ3, LG1,LG2,LG3,LG4}. For this purpose, the most used procedure is the CSS construction because it easily allows us to get parameters $d_z$ and $d_x$ such that our previous stabilizer code detects phase-shift (respectively, qudit-flip) errors up to weight $d_z-1$ (respectively, $d_x -1$). The specific result (see \cite[Lemma 3.1]{Sarve}) states that

\begin{teo}
Let $C_2 \subset C_1^{\perp} \subseteq \mathbb{F}_q^{n}$ be linear codes. The CSS construction gives rise to an asymmetric quantum code with parameters $[[n, \dim C_1^\perp - \dim C_2, d_z/d_x]]_q$, where $d_z$ (respectively, $d_x$) is the minimum Hamming weight of the set $C_1^\perp \setminus (C_2 \cap C_1^\perp)$ (respectively, $C_2^\perp \setminus (C_1 \cap C_2^\perp)$).
\end{teo}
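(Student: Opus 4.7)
My plan is to realise the asymmetric CSS code as the stabilizer associated with $C = C_1 \times C_2 \subseteq \mathbb{F}_q^{2n}$ and then invoke the general stabilizer correspondence recalled just before the statement. First I would verify that $C \subseteq C^{\perp_{ts}}$: for any $(u|v), (u'|v') \in C_1 \times C_2$,
\[
(u|v) \cdot_{ts} (u'|v') = \mathrm{tr}_{q|p}(u \cdot v' - u' \cdot v),
\]
and both $u \cdot v'$ and $u' \cdot v$ already vanish in $\mathbb{F}_q$ because $C_2 \subseteq C_1^\perp$. Since $\dim C = \dim C_1 + \dim C_2$, the correspondence delivers a stabilizer code of length $n$ and dimension $k = n - \dim C = n - \dim C_1 - \dim C_2 = \dim C_1^\perp - \dim C_2$, which is the announced dimension parameter.

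Next I would identify the trace-symplectic dual. Testing a candidate $(b|c)$ separately against the generating subsets $C_1 \times \{0\}$ and $\{0\} \times C_2$, and using the elementary fact that for $\mathbb{F}_q$-linear codes trace-orthogonality with respect to $\mathrm{tr}_{q|p}$ coincides with Euclidean orthogonality, one obtains
\[
C^{\perp_{ts}} = C_2^\perp \times C_1^\perp.
\]
By the stabilizer formalism recalled above, the undetectable errors correspond exactly to the elements of $C^{\perp_{ts}} \setminus C = (C_2^\perp \times C_1^\perp) \setminus (C_1 \times C_2)$.

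For the two asymmetric distances I would isolate the pure qudit-flip and pure phase-shift errors. A pure $X$-error has the form $(\mathbf{a}|\mathbf{0})$ and it lies in $C^{\perp_{ts}} \setminus C$ precisely when $\mathbf{a} \in C_2^\perp$ and $\mathbf{a} \notin C_1$; using $C_1 \subseteq C_2^\perp$ (which follows from the hypothesis) this can be rewritten as $\mathbf{a} \in C_2^\perp \setminus (C_1 \cap C_2^\perp)$, giving the stated $d_x$. Symmetrically, a pure $Z$-error $(\mathbf{0}|\mathbf{b})$ is undetectable iff $\mathbf{b} \in C_1^\perp \setminus (C_2 \cap C_1^\perp)$, producing the claimed $d_z$.

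The linear-algebra content above is routine; the genuine subtlety, and the step on which I would be most careful, is justifying that the asymmetric distances are measured on \emph{pure} $X$- and pure $Z$-errors rather than on hybrid ones. This is the standard convention for asymmetric CSS codes and reflects the fact that the $X$-type and $Z$-type stabilizer generators of a CSS code decouple the two error channels; at this point I would invoke the established asymmetric CSS framework (see \cite{Sarve}) to close the argument.
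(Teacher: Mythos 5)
The paper does not actually prove this statement: it is quoted verbatim from \cite[Lemma 3.1]{Sarve}, so there is no in-paper argument to compare yours against. Your reconstruction is correct and is the standard one: $C=C_1\times C_2$ is trace-symplectic self-orthogonal because $C_2\subseteq C_1^\perp$, the dimension count $n-\dim C=\dim C_1^\perp-\dim C_2$ is right, and $C^{\perp_{ts}}=C_2^\perp\times C_1^\perp$ follows from the fact that trace-duality and Euclidean duality agree on $\mathbb{F}_q$-linear codes. The one place you undersell your own argument is the last step: the restriction to pure $X$- and pure $Z$-errors is not a convention that needs to be imported from the asymmetric CSS literature, but a one-line consequence of the fact that both $C$ and $C^{\perp_{ts}}$ are Cartesian products. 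Indeed, a hybrid error $(\mathbf{a}|\mathbf{b})$ is undetectable iff $\mathbf{a}\in C_2^\perp$, $\mathbf{b}\in C_1^\perp$, and ($\mathbf{a}\notin C_1$ or $\mathbf{b}\notin C_2$); if moreover $\mathrm{wt}(\mathbf{a})\leq d_x-1$ and $\mathrm{wt}(\mathbf{b})\leq d_z-1$, either alternative contradicts the definition of $d_x$ or of $d_z$, so every such hybrid error is detectable. This is exactly the factorization
\[
\left((C_1 \cap C_2^\perp) \cup (\mathbb{F}_q^n \setminus C_2^\perp)\right)\times\left((C_2 \cap C_1^\perp) \cup (\mathbb{F}_q^n \setminus C_1^\perp)\right)
\]
of the detectable set that the paper writes out explicitly in the entanglement-assisted paragraph immediately after the theorem; making it explicit here would close your proof without any appeal to \cite{Sarve}.
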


The previously mentioned stabilizer and asymmetric quantum codes require self-orthog-onality conditions with respect to trace-symplectic duality and not every classical linear code can be used for providing those quantum codes. The self-orthogonality condition can be bypassed if encoder and decoder share some quantity of entanglement \cite{brun1} giving rise to the so called entanglement-assisted quantum error-correcting codes (EAQECCs). In the binary case the construction of these codes is described in \cite{hsieh} (third paragraph of Section II). This construction also holds for codes over finite fields of the type $\mathbb{F}_p$, $p$ being a prime number (see \cite[Remark 1]{wilde08} and \cite{schin} for a proof). There it is proved that one can obtain an EAQECC from a classical code $C \subseteq \mathbb{F}_p^{2n}$  such that $C \not\subseteq C^{\perp_{ts}}$ and the set of detectable quantum errors is given by
\[
\left(C \cap C^{\perp_{ts}}  \right) \cup \left(\mathbb{F}_p^{2n} \setminus  C^{\perp_{ts}} \right).
\]

On $\mathbb{F}_q^{2n}$, $q=p^r$, one can also define a symplectic product:
\[
\left(\mathbf{a}|\mathbf{b}\right) \cdot_{s} \left(\mathbf{a'}|\mathbf{b'}\right) = \left(\mathbf{a} \cdot \mathbf{b'} - \mathbf{a'} \cdot \mathbf{b}\right) \in \mathbb{F}_q.
\]
Using a suitable basis of $\mathbb{F}_q$ over $\mathbb{F}_p$, an isomorphism of $\mathbb{F}_p$-linear spaces $\phi: \mathbb{F}_p^{2r} \rightarrow \mathbb{F}_q^{2}$ can be given, providing an isomorphism of $\mathbb{F}_p$-linear spaces
\[
\phi^E: \left(\mathbb{F}_p^{r}\right)^n  \times \left(\mathbb{F}_p^{r}\right)^n \rightarrow \mathbb{F}_q^{2n}.
\]
With the help of $\phi^E$, in \cite{GaHMR},  the results of EAQECCs over  $\mathbb{F}_p$ can be extended to $\mathbb{F}_q$ and the product $\cdot_s$ instead of $\cdot_{ts}$. Indeed, the following result holds:

\begin{teo}\label{el2}
Let $C \subseteq \mathbb{F}_q^{2n}$ be a linear code over $\mathbb{F}_q$ of dimension $(n-k)$. Denote by $H = (H_X|H_Z)$ a generator matrix for $C$. Let $C' \subseteq \mathbb{F}_q^{2(n+c)}$ be a linear code over $\mathbb{F}_q$ whose projection to the coordinates $1, 2, \ldots, n, n+c+1, n+c+2, \ldots, 2n+c$ equals $C$ and such that $C' \subseteq (C')^\ps$,  $c$ being the minimum required number of maximally entangled quantum states in $\mathbb{C}^q \otimes \mathbb{C}^q$. Then,
$$
2c = \mathrm{rank}\left(H_X H_Z^T - H_Z H_X^T\right) = \dim_{\mathbb{F}_q} C - \dim_{\mathbb{F}_q} \left(C \cap C^\ps\right).
$$

The encoding quantum circuit is constructed from $C'$, and it encodes $k+c$ logical qudits in $\mathbb{C}^q \otimes \cdots (k+c\; \mbox{times}) \cdots \otimes \mathbb{C}^q$ into $n$ physical qudits using $c$ maximally entangled pairs. The minimum distance is
$$
d= d_s\left(C^\ps \setminus (C\cap C^\ps)\right) = \min\left\{ \mathrm{swt}\left(\mathbf{a}|\mathbf{b}\right) \mid \left(\mathbf{a}|\mathbf{b}\right) \in C^\ps \setminus (C\cap C^\ps)\right\}.
$$

As a consequence,   $C$ provides an $[[n,k+c,d;c]]_q$ EAQECC over the field $\mathbb{F}_q$.
\end{teo}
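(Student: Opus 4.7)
The plan is to bootstrap the result from the already established prime-field EAQECC construction (\cite{hsieh,schin,wilde08}) using the $\mathbb{F}_p$-linear isomorphism $\phi^E$. Fixing for $\mathbb{F}_q/\mathbb{F}_p$ a pair of trace-dual bases, one verifies that $\phi^E$ intertwines the trace-symplectic form $\cdot_{ts}$ on $\mathbb{F}_p^{2rn}$ with the composition of $\mathrm{tr}_{q|p}$ and the symplectic form $\cdot_{s}$ on $\mathbb{F}_q^{2n}$, and that it preserves symplectic weights. Under this dictionary, an $\mathbb{F}_q$-linear code $C \subseteq \mathbb{F}_q^{2n}$ corresponds to an $\mathbb{F}_p$-linear $\widetilde{C} = (\phi^E)^{-1}(C)$ with $(\widetilde{C})^{\pt} = (\phi^E)^{-1}(C^{\ps})$, so the $\mathbb{F}_p$ entanglement-assisted lifting theory transfers verbatim.

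For the numerical identities, set $M = H_X H_Z^T - H_Z H_X^T$. Since the $(i,j)$-entry of $M$ is the symplectic pairing of the $i$-th and $j$-th rows of $H$, the matrix $M$ represents $\cdot_s$ restricted to $C$ in the chosen basis. A coefficient vector lies in $\ker M$ precisely when the associated element of $C$ is symplectic-orthogonal to every generator, i.e.\ belongs to $C \cap C^{\ps}$, whence
\[
\mathrm{rank}(M) \;=\; \dim_{\mathbb{F}_q} C \;-\; \dim_{\mathbb{F}_q}\bigl(C \cap C^{\ps}\bigr).
\]
Antisymmetry of $\cdot_s$ forces this rank to be even, and we denote its value by $2c$.

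Next, the code $C'$ is produced via a Witt-type symplectic decomposition $C = (C \cap C^{\ps}) \oplus H$, where $H$ carries a hyperbolic basis $\{e_1,f_1,\ldots,e_c,f_c\}$. One appends $2c$ ancilla coordinates, extends vectors of $C \cap C^{\ps}$ by zeros, and extends $e_i, f_i$ by entries in the $i$-th ancilla pair chosen to cancel their residual symplectic defect. A direct check shows that the resulting $C' \subseteq \mathbb{F}_q^{2(n+c)}$ has $\dim_{\mathbb{F}_q} C' = n-k$, projects to $C$ on the original $2n$ coordinates, and satisfies $C' \subseteq (C')^{\ps}$; a rank count shows that no ancilla extension with fewer than $c$ pairs can achieve self-orthogonality.

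With $C'$ in hand, the stabilizer formalism applies: $C'$ being symplectic-self-orthogonal yields an $[[n+c,\,k+c,\,d]]_q$ stabilizer code on $n+c$ qudits, and reinterpreting the last $c$ qudits as one half of $c$ maximally entangled pairs held by the decoder turns it into an $[[n,k+c,d;c]]_q$ EAQECC. The distance formula is the standard detectability statement applied to $C'$: errors in $C \cap C^{\ps}$ are stabilizers (act trivially), errors outside $C^{\ps}$ are always detected, and the minimum weight of a nontrivial undetectable error is the symplectic weight $d_s\bigl(C^{\ps}\setminus(C\cap C^{\ps})\bigr)$. The main technical obstacle is the combined existence-and-minimality claim for $C'$: one needs both that the symplectic defect of $C$ admits a Witt-type hyperbolic normal form over $\mathbb{F}_q$, and that no smaller ancilla extension works. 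Everything else reduces, via $\phi^E$, to the prime-field arguments of \cite{hsieh,schin,wilde08}.
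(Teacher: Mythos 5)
Your proposal is correct and follows essentially the same route as the paper, which states Theorem \ref{el2} without proof by importing it from \cite{GaHMR}: reduction to the prime-field construction of \cite{hsieh,schin,wilde08} via $\phi^E$ and trace-dual bases, the Gram-matrix interpretation of $H_XH_Z^T-H_ZH_X^T$ restricted to $C$, and the symplectic (Witt/hyperbolic) extension to a self-orthogonal $C'$ together with the rank count showing $c$ is minimal. One small repair: in characteristic $2$ antisymmetry of the Gram matrix alone does not force even rank; you need that $\cdot_s$ is alternating, i.e.\ $\left(\mathbf{a}|\mathbf{b}\right)\cdot_s\left(\mathbf{a}|\mathbf{b}\right)=\mathbf{a}\cdot\mathbf{b}-\mathbf{a}\cdot\mathbf{b}=0$, so the Gram matrix has zero diagonal and hence even rank in every characteristic --- a point that matters here since $q$ may be even.
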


In this paper we are interested in the asymmetric case and we desire to construct asymmetric EAQECCs from two linear codes $C_1$ and $C_2$ over an arbitrary finite field $\mathbb{F}_q$. Assume that $H_1$ (respectively, $H_2$) is a generator matrix of $C_1$ (respectively, $C_2$).

The above described construction of stabilizer codes over $\mathbb{F}_p$ following the CSS procedure determines asymmetric EAQECCs coming from any two linear codes $C_1, C_2 \subseteq \mathbb{F}_p^n$. Here the code $C$ over $\mathbb{F}_p^{2n}$ is $C=C_1 \times C_2$ and $C^{\perp_s} = C_2^\perp \times C_1^\perp$, where $\perp$ denotes the Euclidean dual. Notice that in this case $\cdot_{ts} = \cdot_s$.
The set of detectable errors is
\begin{multline*}
\left((C_1 \cap C_2^\perp) \times (C_2 \cap C_1^\perp)\right) \cup
  \left(\mathbf{F}_p^{2n} \setminus
  C_2^\perp \times C_1^\perp\right)\\
   = \left((C_1 \cap C_2^\perp) \cup (\mathbf{F}_p^n \setminus C_2^\perp)\right)   \times
  \left((C_2 \cap C_1^\perp) \cup (\mathbf{F}_p^n \setminus C_1^\perp)\right).
\end{multline*}
Defining
\begin{equation}
\label{distances}
  d_z = \mathrm{wt}\big(C_1^\perp \setminus (C_2 \cap C_1^\perp)\big) 
  \mbox{and }  d_x = \mathrm{wt}\big(C_2^\perp \setminus (C_1 \cap C_2^\perp)\big),
\end{equation}
where $\mathrm{wt}$ means minimum Hamming weight, it is clear we are able to construct an asymmetric EAQECC which can detect up to $d_x - 1$ qudit-flip errors and
up to $d_z - 1$ phase-shift errors.

These results can be extended to any finite field $\mathbb{F}_q$ using again the above described isomorphism $\phi^E$ and \cite[Proposition 1]{GaHMR} which relates $\cdot_{st}$ and $\cdot_s$. The general result being:
\begin{teo}
\label{el4}
Consider linear codes $C_i \subseteq \mathbb{F}_q^{n}$ of dimension $k_i$ and generator matrix $H_i$, $i=1,2$. Set $d_x$ and $d_z$ as in (\ref{distances}).

Then $C_1 \times C_2 \subseteq \mathbb{F}_q^{2n}$ gives rise to an asymmetric EAQECC which encodes $n-k_1-k_2 + c$ logical qudits into $n$ physical qudits which can correct  up to $\lfloor (d_x - 1)/2 \rfloor$ qudit-flip  errors and up to $\lfloor (d_z - 1)/2 \rfloor$ phase-shift errors. The minimum required of maximally entangled pairs is
\[
c = \mathrm{rank}(H_1H_2^T) =  \dim C_1 - \dim (C_1 \cap C_2^\perp).
\]

As a consequence, we obtain an $$[[n, n-k_1-k_2+c,d_z/d_x;c]]_q$$ asymmetric EAQECC.
\end{teo}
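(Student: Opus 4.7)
The plan is to apply Theorem \ref{el2} to the product code $C = C_1 \times C_2 \subseteq \mathbb{F}_q^{2n}$ and then read off the asymmetric parameters from the CSS-type block structure that this product induces. A direct computation with the symplectic form shows that $(C_1 \times C_2)^{\ps} = C_2^{\perp} \times C_1^{\perp}$: for $(\mathbf{a}|\mathbf{b}) \in \mathbb{F}_q^{2n}$, the condition $(\mathbf{a}|\mathbf{b}) \cdot_s (\mathbf{c}|\mathbf{0}) = -\mathbf{c} \cdot \mathbf{b} = 0$ for all $\mathbf{c} \in C_1$ forces $\mathbf{b} \in C_1^{\perp}$, and symmetrically $(\mathbf{a}|\mathbf{b}) \cdot_s (\mathbf{0}|\mathbf{d}) = \mathbf{a} \cdot \mathbf{d} = 0$ for all $\mathbf{d} \in C_2$ forces $\mathbf{a} \in C_2^{\perp}$. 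Since $\dim(C_1 \times C_2) = k_1+k_2$, Theorem \ref{el2} yields an EAQECC encoding $(n-k_1-k_2)+c$ logical qudits into $n$ physical ones.

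To identify $c$, I would take the block-diagonal generator matrix
\[
H = \begin{pmatrix} H_1 & 0 \\ 0 & H_2 \end{pmatrix}, \quad H_X = \begin{pmatrix} H_1 \\ 0 \end{pmatrix}, \quad H_Z = \begin{pmatrix} 0 \\ H_2 \end{pmatrix},
\]
and then observe that
\[
H_X H_Z^T - H_Z H_X^T = \begin{pmatrix} 0 & H_1 H_2^T \\ -H_2 H_1^T & 0 \end{pmatrix}
\]
has rank $2\,\mathrm{rank}(H_1 H_2^T)$, so Theorem \ref{el2} gives $c = \mathrm{rank}(H_1 H_2^T)$. The second formula $c = \dim C_1 - \dim(C_1 \cap C_2^\perp)$ then follows from the rank-nullity theorem applied to the $\mathbb{F}_q$-linear map $C_1 \to \mathbb{F}_q^{k_2}$, $\mathbf{x} \mapsto H_2 \mathbf{x}^T$, whose kernel is precisely $C_1 \cap C_2^\perp$.

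Finally, the asymmetric distances are extracted from the symmetric formula $d = d_s(C^{\ps} \setminus (C \cap C^{\ps}))$ in Theorem \ref{el2}. Using $C^{\ps} = C_2^\perp \times C_1^\perp$ and $C \cap C^{\ps} = (C_1 \cap C_2^\perp) \times (C_2 \cap C_1^\perp)$, the pure qudit-flip errors $(\mathbf{a}|\mathbf{0})$ have symplectic weight $\mathrm{wt}(\mathbf{a})$ and lie in the difference set exactly when $\mathbf{a} \in C_2^\perp \setminus (C_1 \cap C_2^\perp)$, recovering the $d_x$ of (\ref{distances}); the mirror argument with $(\mathbf{0}|\mathbf{b})$ recovers $d_z$. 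The main obstacle I anticipate is not the linear algebra above but justifying the separability of X- and Z-errors over the general field $\mathbb{F}_q$: over $\mathbb{F}_p$ it is immediate from the CSS construction and the coincidence $\cdot_{ts}=\cdot_s$, whereas in the general case one must transport it via the isomorphism $\phi^E$ and invoke \cite[Proposition 1]{GaHMR}. Once this separability is in place, the claimed correction thresholds $\lfloor(d_x-1)/2\rfloor$ and $\lfloor(d_z-1)/2\rfloor$ follow from the standard stabilizer argument.
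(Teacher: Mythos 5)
Your proposal is correct and follows essentially the same route as the paper, which gives no separate proof of Theorem \ref{el4} but derives it from the discussion immediately preceding it --- the identification $C^{\perp_s}=C_2^\perp\times C_1^\perp$ for $C=C_1\times C_2$, the factorization of the detectable-error set, the rank formula from Theorem \ref{el2} applied to the block-diagonal generator matrix, and the passage from $\mathbb{F}_p$ to $\mathbb{F}_q$ via $\phi^E$ and \cite[Proposition 1]{GaHMR} --- which are exactly the steps you supply. The only detail worth adding is that, to justify the simultaneous correction thresholds, you should observe (as the paper does) that the detectable set contains the full product $\bigl((C_1\cap C_2^\perp)\cup(\mathbb{F}_q^n\setminus C_2^\perp)\bigr)\times\bigl((C_2\cap C_1^\perp)\cup(\mathbb{F}_q^n\setminus C_1^\perp)\bigr)$, so that mixed errors $(\mathbf{a}|\mathbf{b})$ with $\mathrm{wt}(\mathbf{a})\le d_x-1$ and $\mathrm{wt}(\mathbf{b})\le d_z-1$ are covered, not only the pure $X$- and $Z$-errors you analyze.
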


We end this section with a result that assumes that our constituent linear codes are nested. We will see that the asymmetric EAQECC comes from puncturing a code in $\mathbb{F}_q^{2n}$.

\begin{teo}
\label{puncturing}
Let $C_1$ and $C_2$ be $\mathbb{F}_{q}$-linear codes such that
$C_2 \subseteq C_1 \subseteq \mathbb{F}_{q}^n$. Set $k_i = \dim C_i$, $i \in \{1,2\}$ and $d_1^\perp$ (respectively, $d_2$) the minimum distance of the code $C_1^\perp$ (respectively, $C_2$). Suppose that $c$ is a positive integer such that it satisfies $1 \leq c \leq \min \{d_1^\perp, d_2\} -1$. Then, there exists an asymmetric EAQECC with parameters
$$
[[n-c, k_1-k_2+c, d_z/d_x; c]]_q,
$$
where $d_z$ (respectively, $d_x$) is the minimum Hamming weight of the elements in the set $C_2^\perp \setminus C_1^\perp$ (respectively, $C_1 \setminus C_2$).
\end{teo}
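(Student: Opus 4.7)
The plan is to apply Theorem \ref{el2} to a suitable code $\bar C \subseteq \mathbb{F}_q^{2(n-c)}$ obtained by puncturing $c$ symplectic coordinate pairs from a carefully chosen self-orthogonal code $\widehat C \subseteq \mathbb{F}_q^{2n}$ built from the nested pair $C_2 \subseteq C_1$.

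First, I would exploit the nesting condition, which is equivalent to $C_2 \cdot C_1^\perp = 0$: for any $(v_1, w_1), (v_2, w_2) \in C_2 \times C_1^\perp$ one has $(v_1, w_1) \cdot_s (v_2, w_2) = v_1 \cdot w_2 - v_2 \cdot w_1 = 0$, so $C_2 \times C_1^\perp$ is self-orthogonal under $\cdot_s$. Fix a subset $J \subseteq \{1, \dots, n\}$ with $|J| = c$ and extract a self-orthogonal subcode $\widehat C \subseteq C_2 \times C_1^\perp$ of dimension $n - k_1 + k_2 - c$ by imposing $c$ linear constraints that couple the two halves on the coordinates in $J$ (for instance, $\widehat C = \{(v, w) \in C_2 \times C_1^\perp : v|_J = w|_J\}$, or a variant of this for which the constraints are independent). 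Self-orthogonality of $\widehat C$ is inherited from $C_2 \times C_1^\perp$.

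Next, define $\bar C \subseteq \mathbb{F}_q^{2(n-c)}$ to be the image of $\widehat C$ under the projection deleting the coordinates in $J$ in both halves. The hypotheses $c \leq d_2 - 1$ and $c \leq d_1^\perp - 1$ imply that no nonzero element of $C_2$ (respectively $C_1^\perp$) is supported inside $J$, so the puncturing map is injective on $\widehat C$, giving $\dim \bar C = \dim \widehat C = n - k_1 + k_2 - c$. Applying Theorem \ref{el2} to $\bar C$, whose self-orthogonal extension in $\mathbb{F}_q^{2n}$ is precisely $\widehat C$, one then obtains an EAQECC of length $n-c$, dimension $(n-c) - \dim \bar C + c = k_1 - k_2 + c$, and $c$ maximally entangled pairs.

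For the minimum distances, I would invoke the standard duality between puncturing and shortening ($(C|_J)^\perp$ equals the shortening of $C^\perp$ at $J$). Applied to the two halves separately, this translates the computation of symplectic weight in $\bar C^{\ps} \setminus (\bar C \cap \bar C^{\ps})$ into Hamming-weight computations on $C_2^\perp \setminus C_1^\perp$ (yielding $d_z$) and $C_1 \setminus C_2$ (yielding $d_x$). The hypothesis $c \leq \min\{d_1^\perp, d_2\} - 1$ again controls the loss of weight under puncturing, ensuring that no shorter coset representative appears.

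The main obstacle will be the dimension and defect bookkeeping: guaranteeing that the coupling constraints cut the dimension of $\widehat C$ by exactly $c$, and that the punctured image $\bar C$ has defect exactly $2c$ under $\cdot_s$, so that Theorem \ref{el2} produces exactly $c$ entangled pairs rather than some smaller number. Both hinge on the sharp hypothesis $c \leq \min\{d_1^\perp, d_2\} - 1$ and will require a judicious choice of $J$, together with a careful use of the puncturing/shortening duality to match the Hamming-weight sets describing $d_z$ and $d_x$.
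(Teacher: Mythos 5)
Your starting point is the same as the paper's: the nesting $C_2\subseteq C_1$ makes $C=C_2\times C_1^\perp$ self-orthogonal for $\cdot_s$, and the length-$(n-c)$ object is obtained by deleting $c$ coordinates from both halves; the paper's proof is exactly this, delegating the parameter bookkeeping to Theorems 7 and 9 of \cite{GaHMR}. The extra step you insert, however --- first cutting down to the subcode $\widehat C=\{(v,w)\in C_2\times C_1^\perp : v|_J=w|_J\}$ and only then puncturing --- is not only absent from the paper but destroys the construction. For $(v_1,w_1),(v_2,w_2)\in C_2\times C_1^\perp$ the symplectic product of the punctured vectors equals $-\,v_1|_J\cdot w_2|_J+v_2|_J\cdot w_1|_J$ (the full products $v_i\cdot w_j$ vanish by nesting), and under your coupling $v_i|_J=w_i|_J$ this is $-\,w_1|_J\cdot w_2|_J+w_2|_J\cdot w_1|_J=0$. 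Hence your $\bar C$ satisfies $\bar C\subseteq \bar C^{\perp_s}$, Theorem \ref{el2} returns \emph{zero} maximally entangled pairs, and the logical dimension it yields is $k_1-k_2$, not $k_1-k_2+c$: you have produced an ordinary stabilizer code of length $n-c$, not the claimed asymmetric EAQECC. The point of the paper's construction is the opposite: one punctures the full code $C_2\times C_1^\perp$, and the failure of self-orthogonality caused by deleting the $c$ coordinates is precisely the symplectic defect $2c$ that gets converted into $c$ ebits (this is the content of the identity $\dim P(C)-\dim\left(P(C)\cap P(C)^{\perp_s}\right)=2c$ quoted from the proof of Theorem 9 of \cite{GaHMR}). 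Any modification that restores symplectic self-orthogonality after puncturing gives the entanglement away.

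A secondary gap: even the dimension count for $\widehat C$ is not controlled by the stated hypotheses. The $c$ constraints $v|_J=w|_J$ are independent exactly when the map $(v,w)\mapsto v|_J-w|_J$ is onto $\mathbb{F}_q^{c}$, and surjectivity of the projections $C_1^\perp\to\mathbb{F}_q^{J}$ and $C_2\to\mathbb{F}_q^{J}$ is governed by $d(C_1)>c$ and $d(C_2^\perp)>c$ respectively --- quantities about which the theorem assumes nothing. The hypotheses $c\le d_1^\perp-1$ and $c\le d_2-1$ control something different, namely that no nonzero word of $C_1^\perp$ or of $C_2$ is supported inside $J$; that is what makes the puncturing injective and preserves the weight sets defining $d_z$ and $d_x$, and this part of your plan does match the paper. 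To repair the argument you should drop $\widehat C$ altogether, puncture $C_2\times C_1^\perp$ itself, and then either verify directly or cite (as the paper does) that the punctured code has symplectic defect exactly $2c$ and yields $k_1-k_2+c$ logical qudits with $c$ ebits.
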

\begin{proof}
Consider the code $C= C_2 \times C_1^\perp$, then $C \subseteq C^{\perp_s} = C_1 \times C_2^\perp$ and $2c \leq \mathrm{wt} (C \setminus \mathbf{0})-1$, and the result follows from \cite[Theorem 7]{GaHMR} and (\ref{distances}).

Notice that the above asymmetric EAQECC comes from the punctured code defined as
\[
P(C) = \big\{\left(\mathrm{pr}(\mathbf{a}), \mathrm{pr}(\mathbf{b}\right)\; | \; (\mathbf{a},\mathbf{b}) \in C \big\},
\]
$\mathrm{pr}$ being the projection to the first $n-c$ coordinates. In fact, according to the proof of \cite[Theorem 9]{GaHMR}
\[
\dim P(C) - \dim \left(P(C) \cap P(C)^{\perp_s}\right) =2c,
\]
which by Theorem \ref{el2} shows that $c$ is the number of maximally entangled pairs.
\end{proof}

\section{A Gilbert-Varshamov bound for asymmetric EAQECCs}
\label{sect3}
We devote this section to provide a finite and an asymptotic Gilbert-Varshamov-type (GV) bound for asymmetric EAQECCs. We start with the finite case.
\subsection{The finite GV bound}
Let us start with our result.
\begin{teo}
\label{FGV}
Consider positive integer numbers $n, k_1, k_2, d_z,$ $d_x$ and $c$ such that $k_1 \leq n$, $k_2 \leq n$ and
\[
k_1 +k_2-n \leq c \leq \min \{k_1, k_2\},
\]
which satisfy the following inequality
$$
\frac{q^{n-k_1} - q^{k_2 - c}}{q^n-1} \sum_{i=1}^{d_z-1} {n \choose i}(q-1)^i   +
\frac{q^{n-k_2} - q^{k_1 - c}}{q^n-1}\sum_{i=1}^{d_x-1} {n \choose i}(q-1)^i < 1,
$$
then there exists an $[[n,n-k_1-k_2+c,d_z/d_x;c]]_q$ asymmetric EAQECC.
\end{teo}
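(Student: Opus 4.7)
The plan is to give a probabilistic/counting proof in the spirit of the classical Gilbert-Varshamov bound, working with ordered pairs $(C_1,C_2)$ of $\mathbb{F}_q$-linear codes of dimensions $k_1$ and $k_2$. By Theorem~\ref{el4} it suffices to exhibit one such pair satisfying $\mathrm{rank}(H_1H_2^T)=c$ together with $\mathrm{wt}\bigl(C_1^\perp\setminus(C_2\cap C_1^\perp)\bigr)\geq d_z$ and $\mathrm{wt}\bigl(C_2^\perp\setminus(C_1\cap C_2^\perp)\bigr)\geq d_x$. I will show that the fraction of admissible pairs which fail at least one of these weight conditions is strictly less than $1$, forcing the existence of a good pair.

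Let $\mathcal{F}$ denote the set of ordered pairs $(C_1,C_2)$ with $\dim C_i=k_i$ and $\dim(C_1\cap C_2^\perp)=k_1-c$; since $\mathrm{rank}(H_1H_2^T)=\mathrm{rank}(H_2H_1^T)$, this is equivalent to $\dim(C_2\cap C_1^\perp)=k_2-c$, and the hypothesis $k_1+k_2-n\leq c\leq\min\{k_1,k_2\}$ is exactly what makes both of these intersection dimensions legitimate (non-negative and bounded above by the dimensions of $C_2^\perp$ and $C_1^\perp$ respectively). For a fixed $C_1$ of dimension $k_1$, the standard Gaussian-binomial count of subspaces with prescribed intersection with a given subspace yields
\[
\#\{C_2:\dim C_2=k_2,\;\dim(C_2\cap C_1^\perp)=k_2-c\}=q^{c(n-k_1-k_2+c)}\begin{bmatrix}n-k_1\\k_2-c\end{bmatrix}_q\begin{bmatrix}k_1\\c\end{bmatrix}_q,
\]
which does not depend on the choice of $C_1$, and $|\mathcal{F}|$ is obtained by multiplying this quantity by $\begin{bmatrix}n\\k_1\end{bmatrix}_q$.

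For a nonzero $v\in\mathbb{F}_q^n$ I next count the pairs in $\mathcal{F}$ with $v\in C_1^\perp\setminus C_2$, which when $v\in C_1^\perp$ is exactly the condition $v\in C_1^\perp\setminus(C_2\cap C_1^\perp)$. The eligible $C_1$'s are those contained in the hyperplane $v^\perp$, giving $\begin{bmatrix}n-1\\k_1\end{bmatrix}_q$ choices. For each such $C_1$, the $C_2$'s with the rank constraint and $v\in C_2$ are in bijection (via $C_2\mapsto C_2/\langle v\rangle$) with $(k_2-1)$-dimensional subspaces of $\mathbb{F}_q^n/\langle v\rangle$ whose intersection with the $(n-k_1-1)$-dimensional subspace $C_1^\perp/\langle v\rangle$ has dimension $k_2-c-1$, and the same subspace formula counts these as $q^{c(n-k_1-k_2+c)}\begin{bmatrix}n-k_1-1\\k_2-c-1\end{bmatrix}_q\begin{bmatrix}k_1\\c\end{bmatrix}_q$. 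Subtracting, dividing by $|\mathcal{F}|$, and applying the elementary ratio identities
\[
\frac{\begin{bmatrix}n-1\\k_1\end{bmatrix}_q}{\begin{bmatrix}n\\k_1\end{bmatrix}_q}=\frac{q^{n-k_1}-1}{q^n-1},\qquad\frac{\begin{bmatrix}n-k_1-1\\k_2-c-1\end{bmatrix}_q}{\begin{bmatrix}n-k_1\\k_2-c\end{bmatrix}_q}=\frac{q^{k_2-c}-1}{q^{n-k_1}-1},
\]
the fraction of pairs in $\mathcal{F}$ that are bad at $v$ simplifies to exactly $(q^{n-k_1}-q^{k_2-c})/(q^n-1)$. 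The analogous argument with the roles of $C_1,C_2$ and of $d_z,d_x$ swapped yields the coefficient $(q^{n-k_2}-q^{k_1-c})/(q^n-1)$ governing the $d_x$ condition.

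Using the classical identity $\sum_{i=1}^{d-1}\binom{n}{i}(q-1)^i$ for the number of nonzero vectors of $\mathbb{F}_q^n$ of weight at most $d-1$, summing the preceding bounds over all such bad vectors gives a total bad-pair fraction equal to the left-hand side of the displayed inequality in the statement. The hypothesis therefore forces at least one pair in $\mathcal{F}$ to satisfy both weight conditions simultaneously, and Theorem~\ref{el4} applied to this pair yields the desired $[[n,n-k_1-k_2+c,d_z/d_x;c]]_q$ asymmetric EAQECC. I expect the main technical obstacle to lie in the Gaussian-binomial bookkeeping: one has to split the ``bad at $v$'' count into the subcases $v\in C_2$ and $v\notin C_2$, correctly track how the conditioning $v\in C_1^\perp$ reduces the ambient dimension when passing to the quotient $\mathbb{F}_q^n/\langle v\rangle$, and verify that after these reductions the resulting ratio collapses to precisely the expression appearing in the statement.
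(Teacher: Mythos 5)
Your argument is correct, and it follows the same overall first-moment/Gilbert--Varshamov strategy as the paper: fix the family of admissible pairs with the prescribed entanglement parameter, bound the fraction of pairs that are ``bad'' at each nonzero vector of weight below the target distance, and conclude by a union bound that the hypothesis forces a good pair to survive. The one place where you genuinely diverge is in how the key ratio $\frac{q^{n-k_1}-q^{k_2-c}}{q^n-1}$ is established. The paper never enumerates anything explicitly: it first proves, via the transitive action of $GL(n,q)$ on nonzero vectors (and on hyperplanes), that $\# B_z(\mathbf{v})$ is the same for every nonzero $\mathbf{v}$, and then double-counts the triples $(\mathbf{v},C_1,C_2^\perp)$ with $\mathbf{v}\in C_1^\perp\setminus(C_2\cap C_1^\perp)$, using that each admissible pair admits exactly $q^{n-k_1}-q^{k_2-c}$ such vectors (this uses $\dim(C_2\cap C_1^\perp)=k_2-c$, which the paper derives from the rank identity, just as you do). You instead compute $\# B_z(\mathbf{v})$ directly for a fixed $\mathbf{v}$ by Gaussian-binomial enumeration, splitting off the subcount of $C_2$ containing $\mathbf{v}$ via the quotient $\mathbb{F}_q^n/\langle \mathbf{v}\rangle$ and collapsing the ratios of Gaussian binomials; I checked the bookkeeping (including the degenerate case $k_2=c$, where $\begin{bmatrix}n-k_1-1\\-1\end{bmatrix}_q=0$ matches $q^{k_2-c}-1=0$) and it is right. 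The paper's route buys a shorter computation at the cost of an abstract symmetry argument; yours is more explicit and self-verifying but requires the careful Gaussian-binomial manipulations you flag at the end. Either way the conclusion via Theorem~\ref{el4} is the same.
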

\begin{proof}
For simplicity sake, in this proof $C'_2$ will be used instead of $C_2^\perp$.
Consider integer numbers $n,k_1,k_2$ and $c$ as in the statement. Define
\begin{multline*}
A(n,k_1,k_2,c) =\big\{ (C_1, C'_2) \; | \;
C_1, C'_2 \subset \mathbf{F}_q^n, \\
\dim C_1 = k_1, \dim C'_2 = n-k_2, \mbox {and } 
c = \dim C_1 - \dim (C_1 \cap C'_2) \big\}.
\end{multline*}
For $\mathbf{v} \in \mathbf{F}_q^n$, define also
$$
B_{z}(\mathbf{v}) = \big\{ (C_1, C'_2) \in A(n,k_1,k_2,c) \; | \; 
\mathbf{v} \in C_1^\perp \setminus (C^{\prime\perp}_2 \cap C_1^\perp) \big\}
$$
and
$$
B_{x}(\mathbf{v}) = \big\{ (C_1, C'_2) \in A(n,k_1,k_2,c) \; | \; 
\mathbf{v} \in C'_2 \setminus (C_1 \cap C'_2) \big\}.
$$
For nonzero $\mathbf{v}_1$ and $\mathbf{v}_2 \in \mathbf{F}_q^n$,
we claim that
$$\# B_{z}(\mathbf{v}_1)=\# B_{z}(\mathbf{v}_2),$$ where we recall that $\#$ means cardinality.

Let us see a proof. Denote by $GL(n,q)$ the set of invertible matrices on $\mathbf{F}_q^n$ and for a fixed $(D_1, D'_2) \in A(n,k_1,k_2,c)$,
a fixed $M_1 \in GL(n,q)$ with $M_1 \mathbf{v}_1 = \mathbf{v}_2$
and $M'_1  \in GL(n,q)$ with $M'_1 \mathrm{span}(\mathbf{v}_1)^\perp
= \mathrm{span}(\mathbf{v}_2)^\perp$, where $M'_1 \mathrm{span}(\mathbf{v}_1)^\perp$ stands for the linear space given by the products $M'_1 \mathbf{w}$ such that $\mathbf{w} \in \mathrm{span}(\mathbf{v}_1)^\perp$. Then we have
\begin{eqnarray*}
  && \# B_{z}(\mathbf{v}_1)\\
  &=& \# \big\{ (C_1, C'_2) \in A(n,k_1,k_2,c) |  \mathbf{v}_1 \in C_1^\perp \setminus (C^{\prime\perp}_2 \cap C_1^\perp)\big\}\\
  &=& \# \big\{ (C_1, C'_2) \in A(n,k_1,k_2,c) |
\mathrm{span}(\mathbf{v}_1)^\perp \supseteq C_1   \mbox{ and } \mathrm{span}(\mathbf{v}_1)^\perp
\not\supseteq C'_2 \big\}\\
  &=& \# \big\{  (M D_1, MD'_2) \;| \;
\mathrm{span}(\mathbf{v}_1)^\perp \supseteq MD_1   \mbox{ and } \mathrm{span}(\mathbf{v}_1)^\perp
\not\supseteq MD'_2, M \in GL(n,q) \big\}\\
  &=& \# \big\{  (M'_1M D_1, M'_1MD'_2)|
M'_1 \mathrm{span}(\mathbf{v}_1)^\perp  \supseteq \! M'_1MD_1 \\
&& \mbox{ and } M'_1 \mathrm{span}(\mathbf{v}_1)^\perp
\not\supseteq M'_1MD'_2, M \in GL(n,q) \big\}\\
  &=& \# \big\{  (M'_1M D_1, M'_1MD'_2) |
\mathrm{span}(\mathbf{v}_2)^\perp \supseteq M'_1M D_1 \\
&& \mbox{ and } \mathrm{span}(\mathbf{v}_2)^\perp \not\supseteq M'_1MD'_2, M'_1M \in GL(n,q) \big\}\\
&=& \# B_{z}(\mathbf{v}_2).
\end{eqnarray*}
We also claim that
$\# B_{x}(\mathbf{v}_1)=\# B_{x}(\mathbf{v}_2)$. Indeed,
\begin{eqnarray*}
  && \# B_{x}(\mathbf{v}_1)\\
  &=& \# \big\{ (C_1, C'_2) \in A(n,k_1,k_2,c) |
\mathbf{v}_1 \in C'_2 \setminus (C^{\prime}_2 \cap C_1) \big\}\\
  &=& \# \big\{ (MD_1, MD'_2)  \;| \;
\mathbf{v}_1 \in MD'_2 \setminus (MD^{\prime}_2 \cap MD_1),   M \in GL(n,q) \big\}\\
  &=& \# \big\{ (M_1MD_1, M_1MD'_2) |  
M_1\mathbf{v}_1 \in M_1MD'_2 \setminus (M_1MD^{\prime}_2 \cap M_1MD_1),\\&& M_1M \in GL(n,q) \big\}\\
  &=& \# \big\{ (MD_1, MD'_2)  \;| \;
M_1\mathbf{v}_1 \in MD'_2 \setminus (MD^{\prime}_2 \cap MD_1),   M \in GL(n,q) \big\}\\
&=& \# B_{x}(\mathbf{v}_2).
\end{eqnarray*}

Next we will count the quantity of triples $(\mathbf{v}$, $C_1$, $C'_2)$
such that $\mathbf{v} \in C_1^\perp \setminus (C^{\prime\perp}_2 \cap C_1^\perp)$
in two different ways. From \cite[Proposition 4]{GaHMR} and the fact that the rank of a matrix and its transpose coincide, we deduce that
\[
c = \dim C_1 - \dim \left( C_1 \cap C_2^\perp \right) = \dim C_2 - \dim \left( C_2 \cap C_1^\perp \right).
\]
Then, we observe that
$$
\dim C^{\prime\perp}_2 \cap C_1^\perp
= \dim C_2 \cap C_1^\perp \\=
\dim C_2 - (\dim C_2 - \dim C_2 \cap C_1^\perp) = k_2 - c.
$$
For each pair $(C_1, C'_2) \in A(n,k_1,k_2,c)$ there are
\[
q^{n-k_1} - q^{k_2 - c}
\]
vectors $\mathbf{v}$ such that $\mathbf{v} \in C_1^\perp \setminus (C^{\prime\perp}_2 \cap C_1^\perp)$.
Thus the total number of such triples is
\[
(q^{n-k_1} - q^{k_2 - c}) \# A(n,k_1,k_2,c).
\]

On the other hand, we can count the total number of triples as
\[
\sum_{\mathbf{0}\neq \mathbf{w} \in \mathbf{F}_q^n}
\# B_{z}(\mathbf{w}) = (q^n-1) \# B_{z}(\mathbf{v})
\]
for any fixed nonzero $\mathbf{v}$.
This implies
\[
\frac{\# B_{z}(\mathbf{v})}{\# A(n,k_1,k_2,c)}
= \frac{q^{n-k_1} - q^{k_2 - c}}{q^n-1}.
\]

A similar argument shows
\[
\frac{\# B_{x}(\mathbf{v})}{\# A(n,k_1,k_2,c)}
= \frac{q^{n-k_2} - q^{k_1 - c}}{q^n-1}.
\]

If we remove a pair $(C_1, C'_2)$ from
$A(n,k_1,k_2,c)$ either when
$\mathbf{v}_z \in C_1^\perp \setminus (C^{\prime\perp}_2 \cap C_1^\perp)$ or
when $\mathbf{v}_x \in C^{\prime}_2 \setminus (C^{\prime}_2 \cap C_1)$
for $1 \leq \mathrm{wt}(\mathbf{v}_z) \leq d_z-1$ and
for $1 \leq \mathrm{wt}(\mathbf{v}_x) \leq d_x-1$, then  we remove in total
\begin{equation}
\sum_{1 \leq \mathrm{wt}(\mathbf{v}_z) \leq d_z-1} \# B_z(\mathbf{v}_z) +
\sum_{1 \leq \mathrm{wt}(\mathbf{v}_x) \leq d_x-1} \# B_x(\mathbf{v}_x) \label{eq1}
\end{equation}
pairs from $A(n,k_1,k_2,c)$.

As a consequence, there exists at least one
$$[[n,n-k_1-k_2+c,d_z/d_x;c]]_q$$ asymmetric EAQECC whenever the number (\ref{eq1}) is less than $\# A(n,k_1,k_2,c)$ which proves the statement.
\end{proof}

\subsection{The asymptotic GV bound}
From Theorem \ref{FGV} and \cite{matsumotouematsu01}, it can be deduced the following asymptotic GV bound.
\begin{teo}
Consider positive real numbers $K_1, K_2, \delta_z,\delta_x$ and $\lambda$ such that
\[
K_1+K_2 -1 \leq \lambda \leq \min \{K_1,K_2\}.
\]
Set
$h_q(y) :=  -y \log_q y -(1- y) \log_q (1-y)$ the $q$-ary entropy function. If the inequalities
\begin{eqnarray*}
 h_q(\delta_z) + \delta_z \log_q(q-1) &<& K_1 \; \; \mbox{ and } \\
    h_q(\delta_x) + \delta_x \log_q(q-1) &<& K_2
\end{eqnarray*}
hold, then, for sufficiently large $n$, there exists an asymmetric EAQECC with parameters
$$\big[\big[n, \lfloor n-nK_1-nK_2+n\lambda\rfloor, \lfloor n\delta_z\rfloor/\lfloor n\delta_x\rfloor; \lfloor n\lambda \rfloor \big]\big]_q.$$
\end{teo}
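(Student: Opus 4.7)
The plan is to invoke the finite Gilbert--Varshamov bound of Theorem \ref{FGV} with the integer parameters $k_1 = \lfloor nK_1\rfloor$, $k_2 = \lfloor nK_2\rfloor$, $c = \lfloor n\lambda\rfloor$, $d_z = \lfloor n\delta_z\rfloor$ and $d_x = \lfloor n\delta_x\rfloor$, and then verify that its numerical hypothesis holds for all sufficiently large $n$.

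First I would check that these integers satisfy the admissibility range of Theorem \ref{FGV}. Since the real parameters obey $K_1+K_2-1 \leq \lambda \leq \min\{K_1,K_2\}$ and each floor perturbs its argument by less than one, the inequalities
\[
k_1+k_2-n \leq c \leq \min\{k_1,k_2\}
\]
hold for $n$ large (with at most a $\pm 1$ adjustment of $c$ when $\lambda$ lies exactly at an endpoint). The bounds $k_i \leq n$ follow from the entropy hypotheses, since the function $h_q(\delta) + \delta\log_q(q-1)$ takes values in $[0,1]$, so meaningfulness of the inequalities forces $K_1, K_2 \leq 1$.

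The asymptotic heart of the argument is the standard $q$-ary ball volume estimate
\[
\frac{1}{n} \log_q \sum_{i=0}^{\lfloor n\delta\rfloor} \binom{n}{i}(q-1)^i \;\longrightarrow\; h_q(\delta) + \delta\log_q(q-1)
\]
as $n \to \infty$, valid for $0 < \delta < (q-1)/q$; this is exactly the estimate underpinning the classical GV bound and is the result imported from \cite{matsumotouematsu01}. Using it, the first summand in the inequality of Theorem \ref{FGV} is bounded by
\[
\frac{q^{n-k_1}}{q^n-1}\sum_{i=1}^{d_z-1} \binom{n}{i}(q-1)^i \;=\; q^{\,n[h_q(\delta_z)+\delta_z\log_q(q-1)-K_1] + o(n)},
\]
and the second summand admits the analogous bound with $\delta_z, K_1$ replaced by $\delta_x, K_2$. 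The two hypotheses of the theorem force each exponent to be strictly negative, so both summands tend to $0$; hence their sum is less than $1$ for large $n$ and Theorem \ref{FGV} produces an asymmetric EAQECC with the desired parameters.

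The only delicate point I foresee is the boundary check of the admissibility window for $c$ when $\lambda$ coincides with $K_1+K_2-1$ or with $\min\{K_1,K_2\}$: the floor perturbations may require replacing $\lfloor n\lambda\rfloor$ by $\lfloor n\lambda\rfloor \pm 1$, which alters the encoded dimension by at most one and is absorbed by the floor $\lfloor n-nK_1-nK_2+n\lambda\rfloor$ appearing in the statement. A secondary, routine concern is confirming that the hypotheses automatically place $\delta_z$ and $\delta_x$ below $(q-1)/q$, so that the volume asymptotic applies; this is guaranteed because on $[(q-1)/q, 1]$ the function $h_q(\delta)+\delta\log_q(q-1)$ stays at most $1 \geq K_i$, which would contradict strict inequality in the hypothesis.
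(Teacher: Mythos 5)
Your overall strategy is exactly the one the paper intends (the paper itself offers no proof beyond the sentence ``From Theorem \ref{FGV} and \cite{matsumotouematsu01}, it can be deduced\dots''), and the main line --- substitute $k_i=\lfloor nK_i\rfloor$, $c=\lfloor n\lambda\rfloor$, $d_z=\lfloor n\delta_z\rfloor$, $d_x=\lfloor n\delta_x\rfloor$ into Theorem \ref{FGV} and show each summand is $q^{n[h_q(\delta)+\delta\log_q(q-1)-K_i]+o(n)}\to 0$ --- is sound. Two of your side justifications, however, are wrong as written. First, $k_i\leq n$ does not come from the entropy hypotheses (which bound the $K_i$ from \emph{below}, and are perfectly meaningful for $K_i>1$); it comes from the constraint $K_1+K_2-1\leq\lambda\leq\min\{K_1,K_2\}$, which yields $K_1\leq 1$ from $K_1+K_2-1\leq\lambda\leq K_2$ and symmetrically $K_2\leq 1$. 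This is a one-line repair. Incidentally, your worry about boundary adjustments of $c$ is unnecessary: monotonicity of the floor gives $\lfloor n\lambda\rfloor\leq\lfloor nK_i\rfloor$ directly, and $\lfloor nK_1\rfloor+\lfloor nK_2\rfloor-n$ is an integer $\leq n\lambda$, hence $\leq\lfloor n\lambda\rfloor$.

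The second issue is a genuine gap. Your closing claim that the hypotheses ``automatically place $\delta_z$ and $\delta_x$ below $(q-1)/q$'' is false, and the argument you give for it (that the function stays at most $1\geq K_i$ on $[(q-1)/q,1]$) is a non sequitur: being $\leq 1$ does not contradict being $<K_i$ when $K_i<1$. Concretely, for $q=2$, $\delta_z=0.9$, $K_1=K_2=0.95$, $\lambda=0.9$, all hypotheses of the theorem hold while $\delta_z>(q-1)/q$; there the ball volume is $\Theta(q^n)$ rather than $q^{n(h_q(\delta_z)+\delta_z\log_q(q-1))+o(n)}$, the first summand in Theorem \ref{FGV} grows like $q^{n(1-K_1)}$, and your argument produces nothing. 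The honest fix is to add the standing assumption $\delta_z,\delta_x<(q-1)/q$ (implicit in every GV-type bound, and arguably missing from the paper's statement as well) rather than to claim it follows from the other hypotheses.
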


\section{Asymmetric EAQECC from BCH codes}
\label{sect4}

The aim of this section is the construction of asymmetric EAQECCs with good parameters by using the results in Section \ref{sect2}. To carry it out, we consider specific BCH codes. Instead of the classical way, our BCH codes are regarded as subfield-subcodes of evaluation codes defined by evaluating univariate polynomials \cite{cas}. We consider this construction because it can be extended to evaluation by polynomials in several variables \cite{GaH,GaGHR} which we hope will give better codes in the future.

Let $\ell$ be a positive integer such that $r$ divides $\ell$ and consider a positive integer $N$ such that $N-1$ divides $p^\ell -1$. In this section, we use classes of univariate polynomials in the quotient ring $\mathbb{F}_{p^\ell} [X]/I$, where $I$ is the ideal of $\mathbb{F}_{p^\ell}[X]$ generated by $X^{N-1} -1$. If $Z=\{P_1, P_2, \ldots, P_n\}$, where $n=N-1$, is the zero set of $I$ in $\mathbb{F}_{p^\ell}$, we define the  evaluation map $$\mathrm{ev}: \mathbb{F}_{p^\ell}[X]/I \rightarrow \mathbb{F}_{p^\ell}^{n} \;\; \mathrm{ev}(f) = \left(f(P_1), f(P_2), \ldots, f(P_{n}) \right).$$

Assume that $\Delta$ is a subset of $\mathcal{H}:=\{0, 1, \ldots, N-2\}$, then we write $E_\Delta$ the code in $\mathbb{F}_{p^\ell}^n$ generated by the vectors $$\left\{ \mathrm{ev}\left(X^i\right) \; | \; i \in \Delta \right\}.$$

Within the congruence ring $\mathbb{Z}_{N-1}$, we consider minimal cyclotomic cosets with respect to $q=p^r$; minimal means that it contains exactly the elements of the form $aq^t$, $t \geq 0$ in  $\mathbb{Z}_{N-1}$ for some fixed element $ a \in \mathbb{Z}_{N-1}$ under the identification $\mathbb{Z}_{N-1} = \mathcal{H}$. Pick a representative $a$ (the least one) of each minimal cyclotomic coset which we denote $\mathfrak{I}_a$. Then $\{ \mathfrak{I}_{a}\}_{ \in \mathcal{A}}$ is the set of minimal cyclotomic cosets with respect to $q$, $\mathcal{A}$ being the set of representatives above mentioned. In addition set $i_{a} : = \# (\mathfrak{I}_{a})$. For convenience, we will write $$\mathcal{A} = \{a_0 = 0  < a_1 < a_2 < \cdots \} = \{a_j\}_{j=0}^z.$$

We will use the following two results which can be found in \cite{GaH, GaGHR}.

\begin{pro}
\label{la7}
Set $\Delta = \cup_{j=t'}^t  \mathfrak{I}_{a_j}$, $t' < t$. Then the subfield-subcode of $E_\Delta$ over $\mathbb{F}_q$,
$$
E_\Delta |_{\mathbb{F}_{q}} := E_\Delta \cap (\mathbb{F}_{q})^n,
$$
has dimension $\sum_{j=t'}^{t}  i_{a_j}$.
\end{pro}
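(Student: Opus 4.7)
The plan is to compute the $\mathbb{F}_q$-dimension of the subfield-subcode by tracking how the componentwise Frobenius interacts with the evaluation presentation of $E_\Delta$. Since the $n=N-1$ elements $P_1,\dots,P_n$ are distinct and the polynomials $1,X,\ldots,X^{N-2}$ are a basis of $\mathbb{F}_{p^\ell}[X]/I$, the evaluation map $\mathrm{ev}$ is an $\mathbb{F}_{p^\ell}$-linear isomorphism onto $\mathbb{F}_{p^\ell}^n$. Hence the vectors $\mathrm{ev}(X^i)$, $i\in\Delta$, form a basis of $E_\Delta$, and every $v\in E_\Delta$ has a unique expansion $v=\sum_{i\in\Delta}\alpha_i\,\mathrm{ev}(X^i)$ with $\alpha_i\in\mathbb{F}_{p^\ell}$.

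Next I would identify the subfield-subcode as the fixed space of the componentwise $q$-th power map $F\colon \mathbb{F}_{p^\ell}^n\to\mathbb{F}_{p^\ell}^n$, which makes sense because $r\mid\ell$ guarantees $\mathbb{F}_q\subseteq\mathbb{F}_{p^\ell}$. Since the evaluation points satisfy $P_k^{N-1}=1$, for each coordinate one has $(P_k^i)^q=P_k^{iq\bmod(N-1)}$, so $F\bigl(\mathrm{ev}(X^i)\bigr)=\mathrm{ev}(X^{iq\bmod(N-1)})$. Because $\Delta$ is a union of $q$-cyclotomic cosets, the permutation $\sigma\colon i\mapsto iq\bmod(N-1)$ maps $\Delta$ to itself, and a short reindexing gives
\[
F(v)=\sum_{i\in\Delta}\alpha_i^{\,q}\,\mathrm{ev}(X^{\sigma(i)})=\sum_{j\in\Delta}\alpha_{\sigma^{-1}(j)}^{\,q}\,\mathrm{ev}(X^j).
\]
By the uniqueness of the expansion, $F(v)=v$ is equivalent to the system $\alpha_{\sigma(i)}=\alpha_i^{\,q}$ for every $i\in\Delta$.

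The third step is to solve this system coset by coset. Within the minimal cyclotomic coset $\mathfrak{I}_{a_j}=\{a_j,a_jq,\ldots,a_jq^{i_{a_j}-1}\}$, iterating the relation forces $\alpha_{a_jq^k}=\alpha_{a_j}^{q^k}$ for $0\le k\le i_{a_j}-1$, and the closing-up condition $a_jq^{i_{a_j}}\equiv a_j\pmod{N-1}$ produces the single constraint $\alpha_{a_j}^{q^{i_{a_j}}}=\alpha_{a_j}$, i.e.\ $\alpha_{a_j}\in\mathbb{F}_{q^{i_{a_j}}}$. Thus each coset $\mathfrak{I}_{a_j}$ contributes exactly $q^{i_{a_j}}$ admissible choices, and all cosets decouple, so
\[
\#\bigl(E_\Delta|_{\mathbb{F}_q}\bigr)=\prod_{j=t'}^{t}q^{i_{a_j}}=q^{\sum_{j=t'}^{t} i_{a_j}}.
\]
Since $E_\Delta|_{\mathbb{F}_q}$ is $\mathbb{F}_q$-linear, this yields the claimed dimension.

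The only place that requires care is the bookkeeping in step two: one must verify that the rewriting of $F(v)$ uses the uniqueness of coefficients (which needs $\mathrm{ev}$ to be injective on polynomials of degree $<N-1$) and that the indexing modulo $N-1$ is consistent with the fact that $P_k^{N-1}=1$. Once this is in place, the rest is a direct coset-by-coset count, so I expect the translation from the Frobenius action on vectors to the coefficient system $\alpha_{\sigma(i)}=\alpha_i^q$ to be the only delicate step.
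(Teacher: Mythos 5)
Your argument is correct, and it is essentially the standard Galois-descent proof of this fact; note that the paper itself gives no proof of Proposition~\ref{la7} but imports it from \cite{GaH,GaGHR}, where the same idea appears (there phrased via the trace map and Delsarte-type duality rather than directly as fixed points of the coordinatewise Frobenius, but the content is identical: closure of $\Delta$ under multiplication by $q$ turns the subfield-subcode condition into the coefficient recursion $\alpha_{\sigma(i)}=\alpha_i^{q}$, which is solved coset by coset). Two small points deserve a line each. First, $\sigma(i)=iq\bmod (N-1)$ is a permutation of $\Delta$ because $N-1\mid p^{\ell}-1$ forces $\gcd(q,N-1)=1$, so $\sigma$ is injective on $\mathbb{Z}_{N-1}$ and maps the union of cosets $\Delta$ onto itself. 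Second, your count of ``exactly $q^{i_{a_j}}$ admissible choices'' tacitly uses $\mathbb{F}_{q^{i_{a_j}}}\subseteq\mathbb{F}_{p^{\ell}}$: the free parameter $\alpha_{a_j}$ ranges over $\mathbb{F}_{q^{i_{a_j}}}\cap\mathbb{F}_{p^{\ell}}$, and this intersection is all of $\mathbb{F}_{q^{i_{a_j}}}$ because $i_{a_j}$ divides the multiplicative order of $q$ modulo $N-1$, which together with $r\mid\ell$ and $\mathrm{ord}_{N-1}(p)\mid\ell$ gives $r\,i_{a_j}\mid\ell$. With those two remarks supplied, the coset-by-coset count and the conclusion $\dim_{\mathbb{F}_q}E_\Delta|_{\mathbb{F}_q}=\sum_{j=t'}^{t}i_{a_j}$ are complete.
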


\begin{pro}
\label{la8}
The minimum distance of the (Euclidean) dual of the subfield-subcode $E_\Delta |_{\mathbb{F}_{q}}$, where $\Delta = \cup_{j=0}^t  \mathfrak{I}_{a_j}$  is larger than or equal to $a_{t+1}+1$ (BCH bound).
\end{pro}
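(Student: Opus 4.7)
The plan is to execute the classical BCH bound argument in the evaluation-code picture, inserting one extra trace-form step in order to pass from orthogonality against the subfield-subcode $E_\Delta|_{\mathbb{F}_{q}}$ to orthogonality against all of $E_\Delta$. Fix a nonzero $c=(c_1,\ldots,c_n) \in (E_\Delta|_{\mathbb{F}_{q}})^\perp$; the goal is to show $\wt(c) \geq a_{t+1}+1$. I would first observe that $\{0,1,\ldots,a_{t+1}-1\} \subseteq \Delta$: any integer $b$ in this range lies in some minimal cyclotomic coset $\mathfrak{I}_{a_j}$ whose representative $a_j$ (being the least element of the coset) satisfies $a_j \le b < a_{t+1}$, hence $j \le t$. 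Thus once one establishes $c \cdot \mathrm{ev}(X^i) = 0$ for every $i \in \Delta$, the equations $\sum_k c_k P_k^i = 0$ hold in particular for $i = 0, 1, \ldots, a_{t+1}-1$.

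The main step, and the only nontrivial one, is the upgrade from $c \perp E_\Delta|_{\mathbb{F}_{q}}$ to $c \cdot \mathrm{ev}(X^i) = 0$ for every $i \in \Delta$. For a fixed $j \le t$ and any $\lambda \in \mathbb{F}_{p^\ell}$ the vector
\[
\mathrm{Tr}_{\mathbb{F}_{p^\ell}/\mathbb{F}_{q}}\!\bigl(\lambda\, \mathrm{ev}(X^{a_j})\bigr) = \sum_{s} \lambda^{q^s}\, \mathrm{ev}(X^{a_j q^s})
\]
is Frobenius-invariant and hence lies in $\mathbb{F}_{q}^n$; since each summand is $\mathrm{ev}(X^{i})$ with $i \in \mathfrak{I}_{a_j} \subseteq \Delta$, it also lies in $E_\Delta$, hence in $E_\Delta|_{\mathbb{F}_{q}}$. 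Pairing with $c$ and using $c_k \in \mathbb{F}_{q}$ (so $c_k^{q^s}=c_k$ for every $s$), one rearranges
\[
0 = c \cdot \mathrm{Tr}_{\mathbb{F}_{p^\ell}/\mathbb{F}_{q}}\!\bigl(\lambda\, \mathrm{ev}(X^{a_j})\bigr) = \mathrm{Tr}_{\mathbb{F}_{p^\ell}/\mathbb{F}_{q}}\!\bigl(\lambda \cdot (c \cdot \mathrm{ev}(X^{a_j}))\bigr)
\]
for every $\lambda \in \mathbb{F}_{p^\ell}$. Non-degeneracy of the trace bilinear form of $\mathbb{F}_{p^\ell}$ over $\mathbb{F}_{q}$ then forces $c \cdot \mathrm{ev}(X^{a_j}) = 0$; raising to the $q^s$-th power gives $c \cdot \mathrm{ev}(X^{i}) = 0$ for all $i \in \mathfrak{I}_{a_j}$, and ranging over $j = 0, 1, \ldots, t$ covers all of $\Delta$. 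This is the step I expect to be the main obstacle, since it is where one genuinely uses that $\Delta$ is a union of $q$-cyclotomic cosets and invokes non-degeneracy of the trace pairing.

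The final step is a Vandermonde closer. Let $S = \{k : c_k \neq 0\}$ and $w = \#S$; the system $\sum_{k \in S} c_k P_k^i = 0$ for $i = 0, 1, \ldots, a_{t+1}-1$ is an $a_{t+1} \times w$ linear system whose columns are indexed by the pairwise distinct roots of unity $P_k$, $k \in S$. Any $w$ of the $a_{t+1}$ rows yield a nonzero Vandermonde determinant, so the system has full column rank whenever $a_{t+1} \ge w$, which would force $c=0$ and contradict our choice. Therefore $w \ge a_{t+1} + 1$, which is the asserted BCH bound.
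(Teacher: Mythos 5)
Your proof is correct and is essentially the standard BCH-bound argument underlying the references \cite{GaH, GaGHR} that the paper cites for this proposition (the paper itself states it without proof): the trace/non-degeneracy step legitimately upgrades orthogonality to $E_\Delta|_{\mathbb{F}_q}$ into orthogonality to all of $E_\Delta$ because $\Delta$ is a union of $q$-cyclotomic cosets, and the inclusion $\{0,1,\ldots,a_{t+1}-1\}\subseteq\Delta$ plus Vandermonde finishes it. The only nitpick is the phrase ``any $w$ of the $a_{t+1}$ rows yield a nonzero Vandermonde determinant'': a generalized Vandermonde matrix in roots of unity with non-consecutive exponents can be singular, so you should select the $w$ consecutive rows $i=0,\ldots,w-1$, which is all your argument needs.
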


Next we state the main result in this section.
\begin{teo}
\label{elB}
With the above notation consider two different indices $s, t \in \{0,1, \ldots, z\}$ and assume that $s <t$. Then we can construct an asymmetric EAQECC with parameters
\[
\bigg[\bigg[n, n - \sum_{j=0}^{t} i_{a_j}, (a_{t+1}+1/a_{s+1}+1); \sum_{j=0}^{s} i_{a_j}\bigg]\bigg]_{q}.
\]
\end{teo}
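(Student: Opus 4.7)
My plan is to realise the claimed asymmetric EAQECC by invoking Theorem \ref{el4} with BCH-type subfield-subcodes as classical ingredients. Specifically, I would take $C_1 := E_{\Delta_1}|_{\mathbb{F}_q}$ with $\Delta_1 = \bigcup_{j=0}^{t} \mathfrak{I}_{a_j}$ and $C_2 := E_{\Delta_2}|_{\mathbb{F}_q}$ with $\Delta_2 = \bigcup_{j=0}^{s} \mathfrak{I}_{a_j}$. Since $s<t$ we have $\Delta_2 \subseteq \Delta_1$, and hence $C_2 \subseteq C_1$; Proposition \ref{la7} then gives $k_1 := \dim C_1 = \sum_{j=0}^{t} i_{a_j}$ and $k_2 := \dim C_2 = \sum_{j=0}^{s} i_{a_j}$. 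Feeding the pair $(C_1,C_2)$ into Theorem \ref{el4} outputs an asymmetric EAQECC with parameters $[[n,\,n-k_1-k_2+c,\,d_z/d_x;\,c]]_q$, where $c=\dim C_1-\dim(C_1\cap C_2^{\perp})$ and the two minimum distances are those of formula (\ref{distances}). The proof then reduces to matching each of these four parameters with the ones announced in the statement.

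The lower bounds for $d_z$ and $d_x$ are the easy half. By Proposition \ref{la8} (BCH bound), $d(C_1^{\perp})\geq a_{t+1}+1$ and $d(C_2^{\perp})\geq a_{s+1}+1$, so
\[
  d_z \;=\; \mathrm{wt}\bigl(C_1^{\perp}\setminus(C_2\cap C_1^{\perp})\bigr)\;\geq\; d(C_1^{\perp})\;\geq\; a_{t+1}+1,
\]
and analogously $d_x\geq a_{s+1}+1$. These match the asymmetric distances in the statement.

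The main obstacle is to identify $c$ with $\sum_{j=0}^{s}i_{a_j}$, because this will simultaneously pin down the dimension $n-k_1-k_2+c=n-\sum_{j=0}^{t}i_{a_j}$. Since each $\Delta_i$ is a union of $q$-cyclotomic cosets, the codes $C_1,C_2$ are cyclic over $\mathbb{F}_q$ with defining sets $T_i=\mathcal{H}\setminus\Delta_i$; a direct defining-set computation for the cyclic intersection $C_1\cap C_2^{\perp}$ (using that $C_2^{\perp}$ has defining set $\mathcal{H}\setminus(-T_2)$) gives
\[
  c \;=\; \dim C_1-\dim(C_1\cap C_2^{\perp})\;=\;\bigl|\Delta_1\cap(-\Delta_2\bmod n)\bigr|.
\]
I would conclude by arguing that the ordered structure of the minimal cyclotomic cosets ensures $-\Delta_2\subseteq\Delta_1$, whence $|\Delta_1\cap(-\Delta_2)|=|\Delta_2|=\sum_{j=0}^{s}i_{a_j}$. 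This last verification, which amounts to tracking how the involution $x\mapsto -x\bmod n$ permutes the minimal $q$-cyclotomic cosets inside $\mathcal{H}$, is the genuinely non-automatic step, and it is where the specific structure of the parameters $N,\,\ell,\,p,\,q$ must do the work; alternatively, one could bypass the explicit defining-set computation by reading $c$ off as $\mathrm{rank}(H_1H_2^T)$ via the explicit trace-basis description of $C_1$ and $C_2$, arriving at the same conclusion.
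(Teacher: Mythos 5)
Your overall strategy (feed two subfield-subcodes of evaluation codes into Theorem \ref{el4}, get the distances from Proposition \ref{la8}, and reduce everything to showing $c=k_2$) is the right one, and your reduction $c=\dim C_1-\dim(C_1\cap C_2^\perp)=\#\bigl(\Delta_1\cap(-\Delta_2 \bmod n)\bigr)$ is correct. The gap is in the final step: with your choice $\Delta_2=\bigcup_{j=0}^{s}\mathfrak{I}_{a_j}$, the asserted inclusion $-\Delta_2\subseteq\Delta_1$ is simply false in general. The representatives $a_0<a_1<\cdots$ are the \emph{smallest} elements of their cosets, so $\Delta_1$ consists of cosets clustered near $0$, while $-\mathfrak{I}_{a_j}$ is a coset clustered near $n$; the involution $x\mapsto -x$ does not preserve the initial segment of the coset ordering. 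Concretely, take $q=5$, $n=24$ (so $\ell=2$, $r=1$), $t=2$, $s=1$: the cosets are $\mathfrak{I}_0=\{0\}$, $\mathfrak{I}_1=\{1,5\}$, $\mathfrak{I}_2=\{2,10\},\dots$, so $\Delta_1=\{0,1,2,5,10\}$ and your $\Delta_2=\{0,1,5\}$, whence $-\Delta_2=\{0,19,23\}$ and $\Delta_1\cap(-\Delta_2)=\{0\}$. Your construction then gives $c=1$ and dimension $24-5-3+1=17$, not the claimed $c=3$ and dimension $19$ (compare the second row of Table \ref{tabla1}, where $C_2$ is defined by the coset of $23$, not of $1$). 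So as written the proof does not establish the stated parameters.

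The fix is to change the second constituent code: take $\Delta_2=\bigcup_{j=0}^{s}\mathfrak{I}_{a'_j}$, where $a'_j$ is the representative of the minimal cyclotomic coset containing $N-1-a_j$ (this is what the paper does). Then $k_2=\sum_{j=0}^{s}i_{a_j}$ still holds since $\#\mathfrak{I}_{a'_j}=\#\mathfrak{I}_{a_j}$, and now $-\Delta_2=\bigcup_{j=0}^{s}\mathfrak{I}_{a_j}\subseteq\Delta_1$ holds \emph{by construction} because $s<t$; equivalently $C_2^\perp=E_{\mathcal{H}\setminus\bigcup_{j=0}^{s}\mathfrak{I}_{a_j}}|_{\mathbb{F}_q}$, so $C_1\cap C_2^\perp=E_{\bigcup_{j=s+1}^{t}\mathfrak{I}_{a_j}}|_{\mathbb{F}_q}$ and $c=\sum_{j=0}^{s}i_{a_j}=k_2$, which gives the dimension $n-k_1$. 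The price of this modification is that Proposition \ref{la8} no longer applies verbatim to $C_2$ (its cosets are no longer the initial segment $\mathfrak{I}_{a_0},\dots,\mathfrak{I}_{a_s}$); one must add the observation that $C_2$ is monomially equivalent, via $x\mapsto -x$, to $E_{\bigcup_{j=0}^{s}\mathfrak{I}_{a_j}}|_{\mathbb{F}_q}$, so that $d(C_2^\perp)\geq a_{s+1}+1$ still follows from the BCH bound. With that adjustment your argument goes through.
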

\begin{proof}
Consider the linear codes $C_i= E_{\Delta_i} |_{\mathbb{F}_{q}}$, $i=1,2$, where $\Delta_1 = \bigcup_{j=0}^t  \mathfrak{I}_{a_j}$ and $\Delta_2 = \bigcup_{j=0}^s  \mathfrak{I}_{a'_j}$, $a'_j$ being the representative of the minimal cyclotomic coset containing $N-1-a_j$. Taking into account that $\# \mathfrak{I}_{a'_j} = \# \mathfrak{I}_{a_j}$, by Proposition \ref{la7} it holds that $k_1 := \dim C_1 = \sum_{j=0}^{t} i_{a_j}$ and  $k_2 := \dim C_2 = \sum_{j=0}^{s} i_{a_j}$.

Now $C_2^\perp = E_{\Delta'}|_{\mathbb{F}_{q}}$, where $\Delta' = \mathcal{H} \setminus \bigcup_{j=0}^s \mathfrak{I}_{a_j}$ \cite{GaGHR}. Hence, the minimum required of maximally entangled pairs is
$$
  c= \dim C_1 - \dim (C_1 \cap C_2^\perp)  \\
  =  \sum_{j=0}^{t} i_{a_j} -  \sum_{j=s+1}^{t} i_{a_j} = \sum_{j=0}^{s} i_{a_j}.
$$
The minimum distance of the dual codes satisfies $d (C_1^\perp) \geq a_{t+1} +1$ (by Proposition \ref{la8}) and $d (C_2^\perp) \geq a_{s+1}+1$ because $C_2$ contains $s+1$ consecutive cyclotomic cosets and it is equivalent to a code as in Proposition \ref{la8}.

Finally, applying Theorem \ref{el4}, we get an  asymmetric EAQECC with parameters as in the statement.
\end{proof}

From the previous result, we can deduce the following one.

\begin{cor}
\label{elc}
Keeping the above notation where $q=p^r$, assume that
\begin{equation}
\label{c1}
\left( p^r \right)^{\lfloor \frac{\ell}{2r}\rfloor} < n \leq p^\ell -1
\end{equation}
and pick and index $t$ such that
\begin{equation}
\label{c2}
2 \leq a_{t+1} \leq \min \left\{\frac{n \left( p^r \right)^{\lfloor \frac{\ell}{2r}\rfloor}}{p^\ell -1}, n \right\}.
\end{equation}
Let $ s \in \{0,1 , \ldots, z\}$ such that $s < t$. Then we can construct an
asymmetric EAQECC with parameters
$$
\bigg[\bigg[n, n - \frac{\ell}{r} \big\lceil \big(a_{t+1} -1\big) \big(1- \frac{1}{q}\big)\big\rceil -1,\\
 (a_{t+1}+1/a_{s+1}+1); \frac{\ell}{r} \big\lceil \big(a_{s+1} -1\big) \big(1- \frac{1}{q}\big)\big\rceil+1\bigg]\bigg]_q.
$$
\end{cor}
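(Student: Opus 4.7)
The plan is to apply Theorem \ref{elB} with the chosen indices $s<t$, which delivers an asymmetric EAQECC of parameters $[[n, n-k_1, (a_{t+1}+1/a_{s+1}+1); k_2]]_q$ where $k_1 = \sum_{j=0}^t i_{a_j}$ and $k_2 = \sum_{j=0}^s i_{a_j}$, and then to establish the uniform upper bound
\[
\sum_{j=0}^u i_{a_j} \;\leq\; 1+ \frac{\ell}{r}\Bigl\lceil (a_{u+1}-1)\Bigl(1-\tfrac{1}{q}\Bigr)\Bigr\rceil
\]
for $u\in\{s,t\}$. Since $s<t$ forces $a_{s+1}\leq a_t < a_{t+1}$, hypothesis (\ref{c2}) continues to hold with $s$ in place of $t$, so the two cases are handled by the same argument. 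Substituting these estimates in the parameters supplied by Theorem \ref{elB} yields the stated $[[\cdot,\cdot,\cdot;\cdot]]_q$ code.

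The bound rests on two claims about the $q$-cyclotomic cosets in $\mathbb{Z}_n$ whose minimum lies in $\{1,\ldots,a_{t+1}-1\}$. First, every such coset has size exactly $\ell/r$. Setting $m:=\lfloor\ell/(2r)\rfloor$, for each $x\in\{1,\ldots,a_{t+1}-1\}$ and each $j\leq m$ one has $xq^j \leq (a_{t+1}-1)q^m < a_{t+1}q^m \leq nq^{2m}/(p^\ell-1)$. The inequality $2m\leq \ell/r$ gives $q^{2m}\leq p^\ell$, so $xq^j \leq np^\ell/(p^\ell-1)\leq n+1$ thanks to $n\leq p^\ell-1$; and $\gcd(q,n)=1$ (because $n\mid p^\ell-1$) rules out $xq^j=n$. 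Hence $x,xq,\ldots,xq^m$ are $m+1$ strictly increasing integers in $\{1,\ldots,n-1\}$, whence the coset containing $x$ has size at least $m+1$. Since this size divides $\ell/r$ and every proper divisor of $\ell/r$ is at most $\ell/(2r)\leq m$, the size must equal $\ell/r$.

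Second, the number $t$ of cyclotomic coset minima in $\{1,\ldots,a_{t+1}-1\}$ is at most $\lceil(a_{t+1}-1)(1-1/q)\rceil$. To see this, note that every multiple of $q$ in that range, i.e.\ each of the $\lfloor(a_{t+1}-1)/q\rfloor$ integers $q,2q,\ldots,\lfloor(a_{t+1}-1)/q\rfloor \cdot q$, fails to be a minimum because $y/q$ belongs to the same coset and is strictly smaller; thus $t\leq (a_{t+1}-1)-\lfloor(a_{t+1}-1)/q\rfloor = \lceil(a_{t+1}-1)(1-1/q)\rceil$. Combining both claims, $\sum_{j=0}^t i_{a_j} = 1 + t(\ell/r) \leq 1 + (\ell/r)\lceil(a_{t+1}-1)(1-1/q)\rceil$, and the identical reasoning handles $s$. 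The main obstacle I foresee is the divisibility step in the first claim, since one must verify separately in the cases $\ell/r$ even and $\ell/r$ odd that $m+1$ strictly exceeds every proper divisor of $\ell/r$; once this divisibility fact is in place, the remainder is direct computation.
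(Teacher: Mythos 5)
Your overall strategy---apply Theorem \ref{elB} and then identify $\sum_{j=0}^u i_{a_j}$ with $1+\frac{\ell}{r}\lceil(a_{u+1}-1)(1-\frac{1}{q})\rceil$ for $u\in\{s,t\}$---is the right one; the paper's own proof is a one-line appeal to the proof of Theorem~10 of Aly--Klappenecker--Sarvepalli for exactly these two combinatorial facts, whereas you reprove them from scratch. The parts you do prove are correct: the chain $x<xq<\dots<xq^{m}$ of genuine integers below $n$ (using (\ref{c2}), $q^{2m}\le p^\ell$ and $\gcd(q,n)=1$) shows each relevant coset has at least $m+1$ elements, and your divisor analysis, split on the parity of $\ell/r$, does rule out every proper divisor of $\ell/r$, so the cosets have size exactly $\ell/r$; likewise the identity $(a_{t+1}-1)-\lfloor(a_{t+1}-1)/q\rfloor=\lceil(a_{t+1}-1)(1-1/q)\rceil$ is right.

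The gap is in your second claim: you establish only the inequality $t\le\lceil(a_{t+1}-1)(1-1/q)\rceil$, but the corollary asserts a code with those \emph{exact} parameters, and in Theorem \ref{elB} both $k_1=\sum_{j=0}^t i_{a_j}$ and $c=\sum_{j=0}^s i_{a_j}$ are determined by the construction, not free. An upper bound on them yields a code whose dimension is at least, and whose $c$ is at most, the stated values---not the stated code (see the Remark after the corollary, which rewrites the parameters as $n-\frac{\ell}{r}t-1$ and $\frac{\ell}{r}s+1$; the equalities are genuinely used). What is missing is the converse count: every $x\in\{1,\dots,a_{t+1}-1\}$ with $q\nmid x$ is in fact the minimum of its coset. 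This is not automatic, but it follows by pushing your own range argument one step further. If $x$ in that range is not a leader, let $x_0<x$ be the leader of its coset, so $x\equiv x_0q^{j}\pmod{n}$ with $1\le j\le \ell/r-1$. If $j\le m$, then $x_0q^{j}<n$ forces $x=x_0q^{j}$ as integers, hence $q\mid x$. If $j>m$, put $j'=\ell/r-j$, so $1\le j'\le m$ (since $\ell/r\le 2m+1$) and $xq^{j'}\equiv x_0\pmod{n}$; your bound gives $xq^{j'}\le n-1$, so $xq^{j'}=x_0$ as integers, contradicting $xq^{j'}\ge xq>x>x_0$. Thus the non-leaders in the range are exactly the multiples of $q$, which upgrades your inequality to the required equality. (A minor further point: for $u=s$ the lower bound $2\le a_{s+1}$ in (\ref{c2}) can fail when $s=0$, since $a_1=1$; but then both sides of the identity equal $1$, so nothing is lost.)
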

\begin{proof}
It follows from the proof of \cite[Theorem 10]{Akk} where it is showed that if Inequalities (\ref{c1}) and (\ref{c2}) hold, then the number $t$ of non-zero cyclotomic cosets considered is
\[
\left\lceil (a_{t+1} -1) (1- \frac{1}{q})\right\rceil
\]
and all of them have cardinality $\ell/r$.
\end{proof}

\begin{rem}
Notice that the parameters of the asymmetric EAQECC given in Corollary \ref{elc} can be written as follows:
$$
\bigg[\bigg[n, n - \frac{\ell}{r} t -1, (a_{t+1}+1/a_{s+1}+1); \frac{\ell}{r} s +1 \bigg]\bigg]_q.
$$
\end{rem}

\section{Entanglement and Minimum Distances}
\label{sect45}
Assume that $C$ is a (standard) asymmetric quantum code over a field $\mathbb{F}_q$ coming from the CSS construction with parameters $[[n,k,d_z/d_x]]_q$. Considering entanglement and a suitable extension of constituent codes, it is possible to increase the value of $d_z$ (or $d_x$) keeping the length and information rate. Therefore, one may increase the asymmetry ratio (the ratio between $d_z$ and $d_x$). Indeed, for both the standard case and the entanglement-assisted case, one considers two linear codes $C_1$ and $C_2$ and it holds  that $d_z = \mathrm{wt}(C_1^\perp \setminus (C_2 \cap C_1^\perp))$ and $d_x= \mathrm{wt}(C_2^\perp \setminus (C_1\cap C_2^\perp))$. However, for the standard case it must be imposed  that $C_2 \subseteq C_1^\perp$. Thus, given a pair of codes $C_1, C_2$ such that $C_2 \subseteq C_1^\perp$, we may consider a new pair of linear codes, $C'_1$ and $C'_2$, by enlarging $C_1$ or $C_2$, in such a way that either $C'_2 \subseteq (C'_1)^\perp$ or $C'_1 \subseteq C_2^\perp$ do not hold any more, but this new pair gives an asymmetric EAQECC with better parameters. Hence, taking into account that $c= \dim C'_1 - \dim \left( C'_1 \cap (C'_2)^\perp \right) = \dim C'_2 - \dim \left( C'_2 \cap (C'_1)^\perp \right)$, the information rate is kept, one of the minimum distances is the same and the other one increases.

Let us illustrate the above technique with a small example. Keep the notation as in Section \ref{sect4} and assume that $\ell=r$, that is we do not consider subfield-subcodes in this example. Let $C_i$ be the code $E_{\Delta_i}$, for $1 \leq i \leq 2$, where $\Delta_1 = \{2\}$ and $\Delta_2=\{0,n-1\}$. Set $C'_1= E_{\Delta'_1}$ with $\Delta'_1=\{1,2\}$. Then $C_2 \subseteq C_1^\perp=E_{\Delta_1^\perp}$, with $\Delta_1^\perp=\{0,1, \ldots,n-3,n-1\}$. Now setting $C'_2=C_2$, we deduce that the value $d_x$ for the standard case $(C_1,C_2)$ and the entanglement-assisted case $(C'_1,C'_2)$ is the same. However, in the standard case,
$$
d_z= \mathrm{wt} (C_1^\perp \setminus (C_2 \cap C_1^\perp)) =2,
$$
because the cardinality of  $\Delta_1$ is one.  But
$$ (C'_1)^\perp \setminus \left(C'_2 \cap (C'_1)^\perp\right) \subseteq \left(C'_1\right)^\perp
$$ and then, when one considers entanglement, $$d_z= \mathrm{wt} \left((C'_1)^\perp \setminus \left(C'_2 \cap (C'_1)^\perp\right)\right) \geq 3$$ by the BCH bound. As a consequence, to share entanglement allows us to increase the value $d_z$ and therefore the asymmetry ratio.

\section{Examples of Asymmetric EAQECC}
\label{sect5}

Table \ref{tabla1} shows the different values involved in the construction of asymmetric EAQECCs, over several finite fields, constructed as in Theorem \ref{elB}. The last two columns display the representatives of the cyclotomic cosets used to define the codes $C_1$ and $C_2$. Notice that the parameters of our asymmetric EAQECCs follow immediately from Theorem \ref{el4} and are $[[n, n-k_1, d_z/d_x;c]]_q$. All these codes exceed the asymmetric Gilbert-Varshamov bound proved in Theorem \ref{FGV}. In addition, for fixed values $(q,n,k_1,k_2,c)$, we consider the set $P$ of pairs $(d_1,d_2)$ of Z-minimum and X-minimum distances of asymmetric EAQECCs such that $(d_1,d_2)$  does not exceed the bound in Theorem \ref{FGV} but either $(d_1+1,d_2)$ or $(d_1,d_2+1)$ beat it; we have noticed that frequently the cardinality of $P$ is one. Let $(\mathfrak{d}_z,\mathfrak{d}_x)$ be the maximum of $P$ with respect to the lexicographical order where $(1,0) > (0,1)$. Table \ref{tabla1} displays the threshold $(\mathfrak{d}_z,\mathfrak{d}_x)$ as well. Note that many times our codes exceed both values $\mathfrak{d}_z$ and $\mathfrak{d}_x$ by one or two units.

\begin{tiny}

\begin{table*}[h!]
\caption{Asymmetric EAQECC coming from Theorem \ref{elB}}
\begin{tabular}{lllllllllp{0.15\linewidth}p{0.3\linewidth}}
$q$&$n$&$k_1$&$k_2$&$c$&$d_z=d(C_1^\perp)$&$d_x=d(C_2^\perp)$&$(\mathfrak{d}_z, \mathfrak{d}_x)$& Cyclotomic Cosets Defining $C_1$& Cyclotomic Cosets\newline Defining $C_2$\\\hline
4&15&3&1&1&3&2&(2,1)&$\{0,1\}$& $\{0\}$\\
5&24&5&3&3&4&3&(2,2)&$\{0,1,2\}$& $\{0, 23\}$\\
7&19&7&4&4&5&3&(4,2)&$\{0,1,2\}$&$\{0, 18\}$ \\
7&19&13&10&10&9&6&(8,6)&$\{0,1,2,4,5\}$& $\{0,15,17,18\}$\\
8&63&7&1&1&5&2&(3,1)&$\{0,1,2,3\}$&$\{0\}$ \\
8&63&11&3&3&7&3&(5,2)&$\{0,1,2,3,4,5\}$& $\{0,62\}$\\
9&40&10&5&5&7&4&(5,3)&$\{0,1,2,3,4,5\}$& $\{0,38,39\}$ \\
9&40&12&3&3&8&3&(6,2)&$\{0,1,2,3,4,5,6\}$&$\{0,39\}$ \\
9&40&12&7&7&8&5&(6,3)&$\{0,1,2,3,4,5,6\}$& $\{0,37,38,39\}$\\
16&51&9&3&3&6&3&(5,2)&$\{0,1,2,3,4\}$& $\{0,50\}$ \\
16&51&11&1&1&7&2&(6,1)&$\{0,1,2,3,4\}$& $\{0,50\}$ \\
16&51&11&3&3&7&3&(6,2)&$\{0,1,2,3,4,5\}$& $\{0,50\}$\\
16&51&17&5&5&10&4&(10,3)&$\{0,1,2,3,4,5,6,7,8\}$& $\{0,49,50\}$ \\
16&51&19&5&5&12&4&(11,3)&$\{0,1,2,3,4,5,6,7,8,9\}$& $\{0,49,50\}$\\
16&51&23&3&3&15&3&(14,2)&$\{0,1,2,3,4,5,6,7,8,9,11,12\}$& $\{0,50\}$ \\
16&51&23&9&9&15&6&(14,5)&$\{0,1,2,3,4,5,6,7,8,9,11,12\}$& $\{0,47,48,49,50\}$\\
16&51&27&5&5&18&4&(17,3)&$\{0,1,2,3,4,5,6,7,8,9,11,12,14,15\}$& $\{0,49,50\}$ \\
25&48&6&4&4&5&4&(4,2)&$\{0,1,2,3\}$&$\{0,46,47\}$\\
25&48&10&4&4&8&4&(6,2)&$\{0,1,2,3,4,5,6\}$&$\{0,46,47\}$ \\
25&48&10&7&7&8&6&(6,4)&$\{0,1,2,3,4,5,6\}$&$\{0,44,45,46,47\}$\\
25&48&12&3&3&9&3&(7,2)&$\{0,1,2,3,4,5,6,7\}$&$\{0,47\}$\\
25&48&12&6&6&9&5&(7,4)&$\{0,1,2,3,4,5,6,7\}$&$\{0,45,46,47\}$\\
\end{tabular}
\label{tabla1}
\end{table*}
\end{tiny}

We would like to add that our construction from BCH codes (regarded as in Section \ref{sect4}), using Theorem \ref{el4}, may produce good symmetric EAQECCs as well. We use the fact that an asymmetric EAQECC provides a symmetric EAQECC with the same parameters but its minimum distance, which is the minimum of the two minimum distances  $d_x$ and $d_z$. In all our examples, both minimum distances are equal. Thus, Table \ref{tabla2}  displays values of codes coming from our construction, giving rise to symmetric EAQECCs  whose parameters are $[[n, k=n-k_1-k_2+c, d=d_z=d_x;c]]_q$. We have used the Hartmann-Tzeng bound \cite{pel} in our computations. All codes in this table exceed the Gilbert-Varshamov bound for symmetric EAQECCs \cite{GaHMR}. Table \ref{tabla2} also contains, for each code $C$, the minimum distance $\mathfrak{d}$ of a symmetric EAQECC  with the same parameters $(q,n,k,c)$ as $C$ and such that $\mathfrak{d}$ does not beat the above mentioned symmetric Gilbert-Varshamov bound but $\mathfrak{d} +1$ does. In several cases, our codes exceed the value $\mathfrak{d}$ by more than one unit.

We conclude this section by showing another sign of the goodness of our codes and from the advantages of considering entanglement. Table \ref{tabla1} provides two asymmetric EAQECCs with parameters $[[24,19,4/3;c=3]]_5$, and $[[15,12,3/2;c=1]]_4$, which (using entanglement) have better parameters that the optimal (non entanglement-assisted) asymmetric QECCs $[[24,17,4/3]]_5$ and $[[15,11,3/2]]_4$ given in \cite{EZ2}. Finally, Table \ref{tabla2} shows two binary symmetric EAQECCs with parameters $[[15,4,8; c=8]]_2$ and $[[31,25,4; c=6]]_2$ with better parameters than the best  (non entanglement-assisted) QECCs $[[15,4,4]]_2$ and $[[31,25,2]]_2$ given in \cite{grassl}.

\begin{tiny}
\begin{table*}[h!]
\caption{Symmetric EAQECC coming from Theorem \ref{elB}}
\begin{tabular}{lllllllllp{0.25\linewidth}p{0.3\linewidth}}
$q$&$n$&$k_1$&$k_2$&$c$&$d_z=d(C_1^\perp)$&$d_x=d(C_2^\perp)$& $\mathfrak{d}$& Cyclotomic Cosets Defining $C_1$& Cyclotomic Cosets Defining $C_2$\\ \hline
2 & 15 & 11 & 11 & 11 & 8 & 8 & 6& $\{0,1,3,5\}$& $\{0,1,3,5,7 \}$\\
2 & 15 & 10 & 10 & 10 & 7 & 7 & 6 &$\{1,3,5\}$&$\{3,5,7\}$\\
2 & 15 & 9 & 9 & 9 & 6 & 6 & 5 & $\{0,1,3 \}$ &$\{0,3,7 \}$\\
2 & 15 & 7 & 5 & 1 & 4 & 4 & 3 & $\{0,1,5 \}$& $\{0,1 \}$\\
2 & 15 & 10 & 10 & 6 & 7 & 7 & 6 & $\{1,3,5 \}$ &$\{1,3,5 \}$\\
2 & 15 & 14 & 14 & 14 & 15 & 15 & 11 &$\{1,3,5,7 \}$&$\{1,3,5,7 \}$\\
2 & 15 & 13 & 13 & 13 & 10 & 10 & 9 & $\{0,1,3,7\}$ &$\{0,1,3,7 \}$\\
2 & 31 & 6 & 6 & 6 & 4 & 4 & 2 & $\{0,1 \}$ & $\{0,15 \}$\\
2 & 31 & 11 & 11 & 1 & 6 & 6 & 5 &$\{0,1,3 \}$ & $\{0,1,3 \}$\\
2 & 31 & 5 & 5 & 5 & 3 & 3 & 2 &$\{ 1\}$ & $\{15 \}$\\
2 & 31 & 21 & 21 & 16 & 12 & 12 & 11 & $\{0,1,5,7,15 \}$ & $\{0,3,7,11,15 \}$\\
2 & 31 & 20 & 20 & 15 & 11 & 11 & 10 & $\{1,5,7,15 \}$ & $\{3,7,11,15 \}$\\
2 & 31 & 10 & 10 & 10 & 5 & 5 & 4 & $\{1,3\}$ & $\{7,15 \}$\\
2 & 63 & 7 & 7 & 7 & 4 & 4 & 2 & $\{0,1 \}$ & $\{0,31 \}$\\
2 & 63 & 13 & 13 & 13 & 6 & 6 & 5 & $\{0,1,3 \}$ & $\{0, 15, 31 \}$\\
2 & 63 & 9 & 7 & 7 & 4 & 4 & 3 & $\{0,1,21 \}$ & $\{0, 31 \}$\\
2 & 63 & 10 & 7 & 7 & 4 & 4 & 3 & $\{0,1,9 \}$ & $\{0, 31 \}$\\
2 & 63 & 15 & 13 & 13 & 6 & 6 & 5 & $\{0,1,3,21 \}$ & $\{0, 15, 31 \}$\\
2 & 63 & 19 & 19 & 19 & 8 & 8 & 7 & $\{0,1,3,5 \}$ & $\{0, 15,23,31 \}$\\
3 & 26 & 7 & 7 & 1 & 5 & 5 & 4 & $\{0,1,2 \}$ & $\{0, 7, 14 \}$\\
3 & 26 & 18 & 18 & 18 & 13 & 13 & 12 & $\{1,2,4,5,7,8 \}$ & $\{2,5,7,8,14, 17 \}$\\
4 & 15 & 4 & 4 & 1 & 4 & 4 & 3 & $\{0,1,5 \}$ & $\{0,1,5 \}$\\
4 & 15 & 8 & 8 & 3 & 7 & 7 & 6 & $\{0,1,2,3,5 \}$ & $\{0,1,2,3,5 \}$\\
4 & 15 & 11 & 11 & 9 & 10 & 10 & 9 & $\{1,2,3,5,6,7 \}$ & $\{1,2,3,6,10,11 \}$\\
4 & 17 & 4 & 4 & 4 & 4 & 4 & 3 & $\{ 6\}$ & $\{6 \}$\\
4 & 17 & 8 & 8 & 4 & 7 & 7 & 5 & $\{1,3\}$ & $\{1,6\}$\\
4 & 17 & 13 & 13 & 13 & 12 & 12 & 10 & $\{0,1,2,3 \}$ & $\{0,1,2,3 \}$\\
4 & 17 & 16 & 16 & 16 & 17 & 17 & 14 & $\{1,2,3,6 \}$ & $\{1,2,3,6 \}$\\
4 & 17 & 9 & 8 & 8 & 7 & 7 & 6 &$\{0,1,3 \}$ & $\{1,3 \}$\\
5 & 24 & 4 & 4 & 4 & 4 & 4 & 3 &$\{0,1,6 \}$ & $\{0, 18, 19 \}$\\
5 & 24 & 4 & 4 & 1 & 4 & 4 & 3 & $\{0,1,6 \}$ & $\{12, 13, 18 \}$\\
5 & 24 & 10 & 10 & 4 & 8 & 8 & 7 & $\{0,1,2,3,4,6 \}$ & $\{2,6,8,9, 12, 19 \}$\\
5 & 24 & 5 & 4 & 4 & 4 & 4 & 3 & $\{0,1,6,12 \}$ & $\{0, 18,19 \}$\\
5 & 24 & 20 & 20 & 20 & 18 & 18 & 16 &$\{0,1,2,3,4,7,8,9,13,14,18 \}$ & $\{0,2,3,4,6,7,8,9,13,14,19 \}$\\
\end{tabular}
\label{tabla2}
\end{table*}
\end{tiny}

\section{Conclusion}
In this article we show how to construct asymmetric EAQECCs. That is, entanglement-assisted quantum error-correcting codes designed for the case when phase-shift errors happen more likely than qudit-flip errors, as it is with the combined amplitude damping and dephasing channel. Moreover, they can be constructed from any pair of classical linear codes since the encoder and decoder may share entanglement. Following our framework, a concrete construction using BCH codes is proposed and we expect further
families of asymmetric EAQECCs to be proposed. In particular, we will extend the BCH construction by considering evaluation of polynomials in several variables which will hopefully give better
results and a larger constellation of codes. The
Gilbert-Varshamov-type bound provided in this article will allow
researchers to check the goodness of the parameters of codes of this type.

\section{Acknowledgement}
We would like to thank the reviewers for their insightful comments.

\end{document}